\documentclass[final,10pt,twocolumn]{autart} 

\usepackage{bbold}
\usepackage{mathbbol}
\usepackage{amssymb}

\usepackage{graphicx}
\usepackage[table,pdftex,hyperref,svgnames]{xcolor}
\usepackage{url,doi,hyperref}
\hypersetup{colorlinks=true, linkcolor=blue, breaklinks=true, urlcolor=blue}

\usepackage{flushend}
\usepackage{blindtext, graphicx}
\usepackage[utf8]{inputenc}
\usepackage{enumerate}
\usepackage{enumitem}
\usepackage{tabulary}
\usepackage{booktabs}

\graphicspath{{./images/}{./fig}}

\usepackage[cmex10]{amsmath}

\usepackage{amsthm}

\usepackage{amsfonts}
\usepackage{bm}

\usepackage[caption=false,font=footnotesize]{subfig}

\usepackage{lipsum}
\usepackage{mathtools}
\usepackage{cuted}

\usepackage{mathtools,bbm}

\usepackage[prependcaption,colorinlistoftodos]{todonotes}

\newcommand\oprocendsymbol{\hbox{$\square$}}
\newcommand\oprocend{\relax\ifmmode\else\unskip\hfill\fi\oprocendsymbol}

\newtheorem{theorem}{Theorem}[section]
\newtheorem{proposition}[theorem]{Proposition}

\newtheorem{definition}[theorem]{Definition}
\newtheorem{lemma}[theorem]{Lemma}

\newtheorem{remark}[theorem]{Remark}

\def\real{\mathbb{R}} 

\newcommand{\QHbM}{\subscr{Q}{HbM}}
\newcommand{\fHbM}{\subscr{f}{HbM}}
\renewcommand{\QHbM}{\subscr{Q}{homophily}}
\renewcommand{\fHbM}{\subscr{f}{homophily}}

\newcommand{\QIbM}{\subscr{Q}{IbM}}
\newcommand{\fIbM}{\subscr{f}{IbM}}
\renewcommand{\QIbM}{\subscr{Q}{influence}}
\renewcommand{\fIbM}{\subscr{f}{influence}}

\newcommand{\Snz}{\subscr{\mathcal{S}}{nz-row}}
\newcommand{\Sssymm}{\subscr{\mathcal{S}}{s-symm}^{\operatorname{+}}}
\newcommand{\Srsymm}{\subscr{\mathcal{S}}{rs-symm}^{\operatorname{+}}}

\newcommand{\xmin}{\subscr{x}{min}}
\newcommand{\xmax}{\subscr{x}{max}}
\newcommand{\ave}{\operatorname{ave}}

\newcommand\norm[1]{\left\lVert#1\right\rVert}
\newcommand\maxnorm[1]{\subscr{\left\lvert#1\right\rvert}{max}}
\newcommand\minnorm[1]{\subscr{\left\lvert#1\right\rvert}{min}}

\hyphenation{op-tical net-works semi-conduc-tor}

\DeclareSymbolFont{bbold}{U}{bbold}{m}{n}
\DeclareSymbolFontAlphabet{\mathbbold}{bbold}
\newcommand{\vect}[1]{\mathbbold{#1}}

\newcommand{\sign}{\operatorname{sign}}
\newcommand{\diag}{\operatorname{diag}}


\newcommand{\until}[1]{\{1,\dots, #1\}}
\newcommand{\subscr}[2]{#1_{\textup{#2}}}

\newcommand{\setdef}[2]{\{#1 \; | \; #2\}}




\usepackage{enumitem}

\graphicspath{ {images/} }

\renewcommand{\baselinestretch}{1}

\DeclareUnicodeCharacter{2010}{-} 

\begin{document}
\renewcommand{\baselinestretch}{1}
\setlength\parindent{1em}

\begin{frontmatter}

\title{Dynamic Social Balance and Convergent Appraisals \\ via Homophily and Influence Mechanisms}
\thanks[footnoteinfo]{This material is
  based upon work supported by, or in part by, the U.S.\ Army
  Research Laboratory and the U.S.\ Army Research Office under grant
  number W911NF-15-1-0577.}
  
\author[IfA]{Wenjun Mei}\ead{meiwenjunbd@gmail.com}, \author[ccdc]{Pedro Cisneros-Velarde}\ead{pacisne@gmail.com}, \author[cas]{Ge Chen}\ead{chenge@amss.ac.cn}, \author[ccdc,soc]{Noah E. Friedkin}\ead{friedkin@soc.ucsb.edu}, \author[ccdc]{Francesco Bullo}\ead{bullo@engineering.ucsb.edu}

\address[IfA]{Automatic Control Laboratory, ETH Zurich, Switzerland}
\address[ccdc]{Center of Control, Dynamical Systems and Computation, University of California, Santa Barbara, USA}
\address[soc]{Department of Sociology, University of California, Santa Barbara, USA}
\address[cas]{Academy of Mathematics and Systems Science, Chinese Academy of Sciences, Beijing, China}

\begin{keyword}
Structural balance; Multi-agent systems; Homophily/Influence mechanisms; Nonlinear network dynamics.
\end{keyword}

\begin{abstract}
Social balance theory describes allowable and forbidden configurations of the topologies of signed directed social appraisal networks. In this paper, we propose two discrete-time dynamical systems that explain how an appraisal network converges to social balance from an initially unbalanced configuration. These two models are based on two different socio-psychological mechanisms respectively: the homophily mechanism and the influence mechanism. Our main theoretical contribution is a comprehensive analysis for both models in three steps. First, we establish the well-posedness and bounded evolution of the interpersonal appraisals. Second, we fully characterize the set of equilibrium points; for both models, each equilibrium network is composed of an arbitrary number of complete subgraphs satisfying structural balance.   Third, we establish the equivalence among three distinct properties: non-vanishing appraisals, convergence to all-to-all appraisal networks, and finite-time achievement of social balance. In addition to theoretical analysis, Monte Carlo validations illustrate how the non-vanishing appraisal condition holds for generic initial conditions in both models.  Moreover, a numerical comparison between the two models indicates that the homophily-based model might be a more universal explanation for the emergence of social balance. Finally, adopting the homophily-based model, we present numerical results on the mediation and globalization of local conflicts, the competition for allies, and the asymptotic formation of a single versus two factions.
\end{abstract}

\end{frontmatter}

\section{Introduction}

\paragraph*{Motivation and problem description}

Social systems involving friendly/antagonistic relationships are often modeled as signed networks.  \emph{Social
  balance} (also referred to as \emph{structural balance}) theory,
which originated from several seminal works by
Heider~\cite{FH:44,FH:46}, characterizes the stable configurations of
signed social networks, summarized as the famous Heider's axioms: ``Friends' friends are friends; Friends' enemies are enemies; Enemies' friends are enemies; Enemies
 enemies are friends.'' 
Empirical studies for both large-scale networks ~\cite{JL-DH-JK:10,GF-GI-CA:11} and small groups ~\cite{FH:61,MGK:64,HFT:70} indicate that 
social balance is a type of stable configurations frequently observed in real social networks. Dynamic social balance theory, aiming to explain how an initially unbalanced network evolves to a balanced state, has recently attracted much interest. Despite recent progress, it remains a valuable open problem to propose dynamic models that enjoy desirable boundedness and convergence properties. Such models make it possible to further study meaningful predictions and control strategies for the evolution of social networks to balance.

In this paper, we propose two novel discrete-time dynamic social balance models, in which a group of individuals repeatedly update their
interpersonal appraisals via two socio-psychological mechanisms
respectively: the homophily mechanism and the influence mechanism. Loosely speaking, for the homophily mechanism, the interpersonal appraisals of any two individuals in a social group are adjusted based on whether they agree on the appraisals of the group members. For the influence mechanism, each individual assigns influence to others proportionally to her/his appraisal of them. Both mechanisms are well established in the social sciences literature, e.g., see the seminal work by Lazarsfeld and Merton~\cite{PFL-RKM:54}, and the award-winning book by Friedkin and Johnsen~\cite{NEF-ECJ:11}, respectively. For both models, we characterizes their sets of equilibrium and their dynamical behavior. Moroever, we compare these two models via both theoretical analysis and numerical comparisons and give a tentative answer that, compared to the influence mechanism, the homophily mechanism is a more universal explanation for the evolution of appraisal networks to social balance.

\paragraph*{Literature review}

Following the early works by Heider~\cite{FH:44,FH:46}, static social
balance theory has been extensively studied in the last seven decades, including the characterization of the balanced configurations for both complete networks~\cite{FH:53,DC-FH:56} and arbitrary networks~\cite{WdN:99,DE-JK:10}; the measure of the degree of balance~\cite{DC-TCG:66,NMH-RBH-CBDS:69}; the clustering and its relation to balance~\cite{JAD:67,PD-AM:96}; as well as the relevant partitioning algorithms~\cite{PD-DK:01,MK-KSC:12}. Numerous empirical studies have been conducted for different social systems, including social systems at the national level~\cite{FH:61,HBM-RR:85}, at the group level~\cite{MGK:64,CMR-NEF:17}, and at the individual level~\cite{HFT:70,GF-GI-CA:11}. For a comprehensive review we refer to~\cite{XZ-DZ-FYW:15}.

In the last decade, researchers have started to incorporate dynamical
systems into the social balance theory, aiming to explain how a signed
network evolves to a structurally balanced state. Early works include the discrete-time \emph{local triad dynamics} (LTD)~\cite{TA-PLK-SR:05} and \emph{constrained triad dynamics}~\cite{TA-PLK-SR:06}. These models suffer from the existence of unbalanced equilibria, i.e., the \emph{jammed states}. Other works based on network games are proposed by van de Rijt~\cite{AvdR:11} and Malekzadeh et al.~\cite{MM-MF-PJK-HRR-MAS:11}. In all the aforementioned models, the link weights in the signed networks only take values from the set $\{-1,0,1\}$.

Our models are closely related to the continuous-time dynamic social balance models~\cite{KK-PG-PG:05,SAM-JK-RDK-SHS:11,VAT-PVD-PDL:13}, in which the link weights can take arbitrary real values. The model proposed by Ku{\l}akowski et al.~\cite{KK-PG-PG:05} is based on an influence-like mechanism. Theoretical analysis by Marvel et al.~\cite{SAM-JK-RDK-SHS:11} reveals that for symmetric initial conditions, the probability of achieving social balance in finite time tends to 1 as the network size tends to infinity. Traag et al.~\cite{VAT-PVD-PDL:13} extend the set of initial conditions to normal matrices and provide a sufficient condition for finite-time social balance. In~\cite{VAT-PVD-PDL:13}, the authors also propose an alternative continuous-time model based on a homophily mechanism, and prove that the homophily-based model leads to finite-time social balance for generic initial conditions. In addition to theoretical analysis, Ku{\l}akowski et al.~\cite{KK-PG-PG:05} investigate numerically the relation between the formation of factions and the initial appraisal distribution, for the influence-like model. The corresponding results for the homophily-based model is unavailable in previous literature. A non-negligible shortcoming of all the models mentioned above is that, the interpersonal appraisals diverge to infinity in finite time. To remedy this shortcoming, in~\cite{KK-PG-PG:05}, the authors impose a predetermined upper bound of the interpersonal appraisals. As the consequence, the magnitudes of all the appraisals converge to the predetermined upper bound, see the rigorous analysis in~\cite{SW-MC-KK-AB:15}. In addition to those continuous-time models, Jia et al.~\cite{PJ-NEF-FB:13n} propose a discrete-time model, with a generalized notion of social balance and a modified influence mechanism, and establish its convergence to the generalized balance.

\paragraph*{Contributions}

The contribution of this paper are manifold. Our paper is the first to propose two well-behaved discrete-time models that explain the evolution of interpersonal appraisal networks towards the classic Heider's social balance, via the homophily and the influence mechanisms
respectively. Both mechanisms are cast in the language of influence systems; indeed the key novelty is the formulation of appropriate influence matrices such that both models are well-behaved and enjoy the desirable properties of bounded evolution and convergent appraisals.

Regarding the theoretical analysis, we first fully characterizes the two models' respective equilibrium sets, each of which turn out to include all possible balanced configurations in terms of sign pattern. Second, we establish the equivalence relations among the non-vanishing appraisal condition, the convergence of appraisal networks to all-to-all balanced configurations, and the achievement of social balance in finite time.

Numerical study of our both models leads to various insightful results. First, Monte-Carlo validations indicate that the non-vanishing appraisal condition holds for generic initial conditions, while, for the influence-based model, the non-vanishing appraisal condition holds almost surely if the initial appraisals satisfy some generalized notion of symmetry. Second, further simulation results show that, for the influence-based model with generic initial conditions, the probability that the appraisal network converges to social balance monotonically decays to 0 as the network size tends to infinity. Based on this observation we conclude that the homophily-based model might be a more universal explanation than the influence-based model for the evolution to social balance. Third, for the homophily-based model, we numerically investigate its behavior under perturbation when the appraisal network is composed of multiple structurally balanced subnetworks. Such numerical study reveals some insightful and realistic interpretations such as the escalation and mediation of local conflicts. Finally, we study by simulation the effect of the initial appraisal distribution on the formation of factions, i.e., whether an appraisal network converges to two antagonistic factions or an all-friendly network.

The main advantage of our models, compared with the previous continuous-time models~\cite{KK-PG-PG:05,VAT-PVD-PDL:13}, is that our models are well-behaved, in the sense that our models enjoy the desirable property of convergent appraisals, (as opposed to the undesirable property of finite-time divergence). The convergence property makes it possible to characterize the systems' fixed points and their stability, as well as the transition from one equilibrium to another. In our models, the convergent appraisals are due to the introduction of either homophily or interpersonal influence networks, which also provide a connection between the field of dynamic social balance and the field of opinion dynamics with antagonistic interactions, e.g.~\cite{CA:13}. In addition, our models have the desired property that they are invariant under scaling, i.e., if a solution is scaled by a constant, it remains a solution. This feature is particularly important in the modelling of social systems, in which quantities are usually meaningful only in the relative sense. Compared with the model proposed in~\cite{KK-PG-PG:05} with bounded evolution, our models do not rely on any predetermined bound to prevent divergence and the asymptotic appraisals in our models are determined by the initial condition rather than the manually determined bound. Some additional advantages of our models are discussed in Section 5.1.

\paragraph*{Organization}
Section~\ref{notbconc} introduces some notations and basic
concepts. Section~\ref{sectionModelA} and~\ref{sectionModelB} contain the
theoretical analyses of our models. Section~\ref{moredicussions} provides
further discussions and numerical results. Section~\ref{conclSec} gives the
conclusion. An auxiliary lemma is provided in the Appendix. 
  Some proofs are provided in the technical
  report~\cite{WM-PCV-NEF-FB:17f-arxiv} with full details.

\section{Notations and basic concepts}
\label{notbconc} 
\paragraph*{Notations}
Some frequently used notations are defined in Table~\ref{table:notations}. The following sets will be used throughout this paper:
\begin{align}
 \label{eq:def-set-Snz}\Snz  = & \setdef{X\in\real^{n\times{n}}}{ \text{for every } i, X_{i*}\neq \vect{0}_n^{\top}},\\
 \label{eq:def-set-Sssymm}\Sssymm  = & \setdef{X\in\real^{n\times{n}}}{\sign(X)=\sign(X)^\top\\
 \notag    & \qquad \qquad \quad\,\,\,\text{ and }X_{ii} > 0 \text{ for every } i }, \\
 \label{eq:def-set-Sgsymm}\Srsymm  = & \setdef{X\in \Sssymm}{\text{there exists }\gamma \succ \vect{0}_n\\
\notag    &  \qquad \,\,\,\text{such that }\diag(\gamma)X = \big(\diag(\gamma)X\big)^{\top}}.
\end{align}
By definition, $\Srsymm \subset \Sssymm \subset \Snz$. In addition, $\Sssymm$ and $\Srsymm$ are both invariant under permutations. That is, given any $X\in\Sssymm $ (or $X\in\Srsymm$ resp.) and a permutation matrix $P$, we have $PXP^{\top}\in\Sssymm$ (or $PXP^{\top}\in\Srsymm$ resp.).

\begin{table}[htbp]\caption{Notations frequently used in this paper}\label{table:notations}
\begin{center}
\begin{tabular}{r p{6.2cm} }
\toprule
$\vect{1}_n$ ($\vect{0}_n$) & the all-ones (all-zeros) $n\times 1$ vector\\
$\mathbb{R}$ ($\mathbb{Z}_{\ge 0}$) & set of real numbers (non-negative integers)\\
$\succ$ ($\prec$) & entry-wise greater than (less than) \\
$|X|$ & entry-wise absolute value of matrix $X$\\
$\sign(X)$ & entry-wise sign of $X$, i.e., $\sign(X)_{ij}=1$ if $X_{ij}>0$, $\sign(X)_{ij}=-1$ if $X_{ij}<0$, and $\sign(X)_{ij}=0$ if $X_{ij}=0$.\\
$|X|_{\max}$ & the \emph{max norm} of $X$, i.e, $\max_{i,j} |X_{ij}|$ \\
$X_{i*}$ ($X_{*i}$) & the $i$-th row (column) vector of $X$\\
$G(X)$ &  weighted digraph associated with adjacency matrix $X$. We allow negative link weights. That is, if $X_{ij}<0$, then there exists a link in $G(X)$ from $i$ to $j$ with
negative weight $X_{ij}$.\\ 
\bottomrule
\end{tabular}
\end{center}
\end{table}

\paragraph*{Appraisal matrices and social balance}
Given a group of $n$ agents, the interpersonal appraisals are given by the \emph{appraisal matrix} $X\in
\real^{n\times n}$. The sign of $X_{ij}$ determines whether
$i$'s appraisal of $j$ is positive, i.e., $i$ ``likes'' $j$, or
negative, i.e., $i$ ``dislikes'' $j$. The magnitude of $X_{ij}$
represents the intensity of the sentiment. When $X_{ij} = 0$, the
appraisal is one of indifference. The diagonal entry $X_{ii}$
represents agent $i$'s self-appraisal. The weighted digraph
$G(X)$ associated to $X$ as the adjacency matrix is referred to as the
\emph{appraisal network}.
\smallskip

\begin{definition}[Social balance~\cite{FH:53,FH:46}]\label{defSB}
An appraisal network $G(X)$ satisfies social balance, or, equivalently, is structurally balanced, if the appraisal matrix $X$ satisfies the following properties: (S1) $X_{ii}>0$ for any $i\in \until{n}$; (S2) $\sign(X_{ij}) \sign(X_{jk})\sign(X_{ki}) =1$ for any $i,j,k \in \until{n}$.
\end{definition}
According to~\cite{FH:53}, a structurally balanced appraisal network either has only one \emph{faction} in which the interpersonal appraisals are all positive, or is composed of two antagonistic factions such that individuals in the same faction positively appraise each other while all the inter-faction appraisals are negative.

\smallskip
\begin{lemma}[Equivalent conditions for social balance]
  \label{numCond2}
  For any $X\in\real^{n\times n}$ such that all of its entries
  are non-zero, $G(X)$ satisfies social balance if and only if it
  satisfies (S1) in Definition~\ref{defSB} and (S3): $\sign(X_{i*})=\pm\sign(X_{j*})$, for all $i,j\in\until{n}$.
  Moreover, for $G(X)$ satisfying social balance, $X$ is
  sign-symmetric, i.e., $\sign(X)=\sign(X)^{\top}$.
\end{lemma}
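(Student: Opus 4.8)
The plan is to reduce everything to statements about the sign matrix $S := \sign(X)$, whose entries all lie in $\{+1,-1\}$ because $X$ has no zero entry. In this notation (S1) reads $S_{ii}=1$, (S2) reads $S_{ij}S_{jk}S_{ki}=1$, and (S3) reads that each row of $S$ equals $\pm$ every other row. Since social balance is exactly (S1) together with (S2), under the standing hypothesis (S1) it suffices to prove the equivalence of (S2) and (S3); the sign-symmetry assertion will drop out along the way. Throughout I would exploit the single elementary fact that a product of $\pm 1$ signs squares to $1$, so any identity among the $S_{ij}$ can be rearranged freely by multiplying through.

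The first and most useful observation is that sign-symmetry is already forced by (S1) and (S2): specializing the triangle identity (S2) to the triple $(i,i,k)$ gives $S_{ii}S_{ik}S_{ki}=1$, and since $S_{ii}=1$, this yields $S_{ik}=S_{ki}$ for all $i,k$. This step establishes the final sign-symmetry claim and, more importantly, lets me interchange $S_{ki}$ and $S_{ik}$ freely in both directions of the argument.

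For the direction (S2) $\Rightarrow$ (S3), I would substitute $S_{ki}=S_{ik}$ into (S2) to get $S_{ij}S_{jk}S_{ik}=1$, then multiply through by $S_{ij}S_{ik}$ to obtain $S_{jk}=S_{ij}S_{ik}$ for every $k$. Because the scalar $S_{ij}$ does not depend on $k$, this says precisely that row $j$ of $S$ equals $S_{ij}$ times row $i$, i.e. $\sign(X_{j*})=\pm\sign(X_{i*})$, which is (S3). For the converse (S3) $\Rightarrow$ (S2), I write (S3) as $S_{ik}=\varepsilon_{ij}S_{jk}$ for all $k$, with a fixed sign $\varepsilon_{ij}\in\{+1,-1\}$ that is well defined since the rows are nonzero. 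Evaluating at $k=j$ and using $S_{jj}=1$ identifies $\varepsilon_{ij}=S_{ij}$; evaluating at $k=i$ and using $S_{ii}=1$ gives $S_{ij}S_{ji}=1$, recovering sign-symmetry. Then $S_{ik}=S_{ij}S_{jk}$ together with $S_{ki}=S_{ik}$ yields $S_{ij}S_{jk}S_{ki}=S_{ij}S_{jk}(S_{ij}S_{jk})=1$, which is (S2).

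I do not anticipate a genuine obstacle, since the entire argument rests on the triviality that $\pm1$ signs are their own inverses. The only point requiring care is the bookkeeping of which index triples to specialize: the crucial sign-symmetry identity must be extracted first, by setting two of the three indices equal in (S2) (or by the $k=i$ evaluation in (S3)), before it can be reused to close each direction of the equivalence.
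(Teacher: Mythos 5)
Your proof is correct and follows essentially the same route as the paper's: both extract sign-symmetry by specializing the triangle condition (S2) with two equal indices (respectively, by evaluating (S3) at $k=i$ using (S1)), and both close each direction via the identity $\sign(X_{jk})=\sign(X_{ij})\sign(X_{ik})$, i.e.\ that $\sign(X_{ik})\sign(X_{jk})$ is independent of $k$. Recasting everything as arithmetic in the $\{\pm1\}$-valued matrix $S=\sign(X)$ is only a notational repackaging of the paper's $\delta$-argument, and your reduction to proving (S2) $\Leftrightarrow$ (S3) under the standing hypothesis (S1) is legitimate since (S1) appears on both sides of the claimed equivalence.
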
 

\begin{proof}
Suppose that~(S1) and~(S3) hold. For any $i,j\in \{1,\dots, n\}$, $\sign(X_{i*})=\delta\sign(X_{j*})$, where $\delta$ is either $-1$ or $1$. Therefore, $\sign(X_{ij})\sign(X_{ji})=\delta^2 \sign(X_{jj})\sign(X_{ii})=1$, i.e., $\sign(X_{ij})=\sign(X_{ji})$. Moreover, for any $k$, since $\sign(X_{ij})=\delta \sign(X_{jj})$ and $\sign(X_{jk})=\delta\sign(X_{ik})$, we have 
\begin{align*}
\sign(X_{ij}) & \sign(X_{jk})\sign(X_{ki})\\
& =\delta^2 \sign(X_{jj})\sign(X_{ik})\sign(X_{ki})=1.
\end{align*}
Therefore,~(S1) and~(S3) imply~(S1) and~(S2) in Definition~\ref{defSB}, as well as the sign symmetry of $X$. 

Now suppose~(S1) and~(S2) in Definition~\ref{defSB} hold. The sign symmetry
of $X$ is obtained by letting $k=j$ in~(S2). Moreover, due to the sign
symmetry and~(S2), we obtain
$\sign(X_{ij})\sign(X_{jk})\sign(X_{ik})=1$. Therefore,
$\sign(X_{ik})\sign(X_{jk})$ does not depend on $k$ and is equal to
$\sign(X_{ij})\in\{-1,1\}$. That is, $\sign(X_{i*})=\pm \sign(X_{j*})$ for
any $i$ and $j$. This concludes the proof.
\end{proof}

\section{Homophily-based Model}
\label{sectionModelA}

In this and the next section, we propose and analyze two dynamic social balance models respectively. These two models are distinct in the microscopic individual interaction mechanisms.
\smallskip

\begin{definition}[Homophily-based model]
  Given an initial appraisal matrix
  $X(0)\in{\Sssymm}\subset\real^{n\times{n}}$, the homophily-based
  model is defined by:
  \begin{equation} 
    \label{eq:HbM}
    X(t+1)=\diag(\lvert X(t)\rvert \mathbbm{1}_{n})^{-1}X(t)X^\top(t).
  \end{equation}
\end{definition}

\begin{remark}[Interpretation]
  Equation~\eqref{eq:HbM} updates the appraisals based on what can be
  considered as the \textit{homophily mechanism}. For any
  $i,j\in\until{n}$, agent $i$'s appraisal of agent $j$ at time step
  $t+1$ depends on to what extend they are in agreement with each
  other on the appraisals of all the agents in the group. For any
  $k\in \until{n}$, if $\sign(X_{ik}(t))=\sign(X_{jk}(t))$, then the
  term $X_{ik}(t)X_{jk}(t)$ contributes positively to $X_{ij}(t+1)$,
  and vice versa. The matrix $W(X(t))=\diag(|X(t)|\vect{1}_n)^{-1}X(t)$ can be regarded as the influence matrix constructed from the appraisals through homophily mechanism. Since $X_{ij}(t+1)=\sum_k W_{ik}(t)X_{jk}(t)$, each $|W_{ik}(t)|$ represents how much weight individual $i$ assigns to the agreement on the appraisal of individual $k$. Note that the entry-wise absolute value, i.e., $|W(t)|$, is row-stochastic. Such type of influence matrices has been widely studied in the opinion dynamics with antagonism, see~\cite{JMH:14,WX-MC-KHJ:16,AVP-MC:17}.
\end{remark}

The proposition below presents some useful results on the finite-time
behavior of the homophily-based model.

\smallskip

\begin{proposition}[Invariant set and finite-time behavior of HbM]\label{prop:HbM-finite-time-behav}
Consider the dynamical system~\eqref{eq:HbM} and define $\fHbM (X) =
\diag(\lvert X\rvert \mathbbm{1}_{n})^{-1}XX^\top$.  Pick $X_0\in
\Snz$. The following statements hold:
\begin{enumerate}

\item \label{fact:HbM-inv-set} the map $\fHbM$ is well-defined for any
  $X\in \Snz$ and maps $\Snz$ to $\Sssymm$;

\item \label{fact:HbM-well-defined} the solution $X(t)$,
  $t\in\mathbb{Z}_{\geq0}$, to equation~\eqref{eq:HbM} from initial
  condition $X(0)=X_0$ exists and is unique;

\item \label{fact:HbM-upper-bound} the max norm of any solution $X(t)$
  satisfies
  \begin{equation*}
    \maxnorm{X(t+1)} \leq \maxnorm{X(t)} \leq \maxnorm{X(0)};
  \end{equation*} 
\item \label{fact:HbM-inv-scaling}for any $c>0$, the trajectory $c X(t)$ is the solution to
  equation~\eqref{eq:HbM} from initial condition $X(0)=c X_0$.
\end{enumerate}
\end{proposition}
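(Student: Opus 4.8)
The plan is to reduce everything to one-step properties of the map $\fHbM(X)=\diag(\lvert X\rvert\vect{1}_n)^{-1}XX^\top$ and then lift statements \ref{fact:HbM-well-defined}--\ref{fact:HbM-inv-scaling} to the full trajectory by induction. Throughout I would write $D(X)=\diag(\lvert X\rvert\vect{1}_n)$, the diagonal matrix whose $i$-th entry is the row $\ell_1$-norm $\sum_k\lvert X_{ik}\rvert$. For statement \ref{fact:HbM-inv-set}, well-definedness is immediate: if $X\in\Snz$ then every row is nonzero, so each $\sum_k\lvert X_{ik}\rvert>0$, whence $D(X)$ is invertible and $\fHbM(X)$ exists. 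To show $\fHbM(X)\in\Sssymm$, set $Y=\fHbM(X)$, so $Y_{ij}=\big(\sum_k\lvert X_{ik}\rvert\big)^{-1}\sum_k X_{ik}X_{jk}$. The diagonal entry $Y_{ii}$ is a positive multiple of $\sum_k X_{ik}^2>0$, giving the positivity condition. For sign-symmetry, the key observation is that $Y=D(X)^{-1}(XX^\top)$ is obtained from the \emph{symmetric} matrix $XX^\top$ by left-multiplication with a positive diagonal matrix; since scaling by a positive number preserves signs, $\sign(Y_{ij})=\sign\big((XX^\top)_{ij}\big)=\sign\big((XX^\top)_{ji}\big)=\sign(Y_{ji})$, i.e.\ $\sign(Y)=\sign(Y)^\top$.

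Statement \ref{fact:HbM-well-defined} then follows by induction. Since $\Sssymm\subset\Snz$ (noted after the set definitions) and $\fHbM$ sends $\Snz$ into $\Sssymm$ by \ref{fact:HbM-inv-set}, starting from $X_0\in\Snz$ every iterate $X(t)$ stays in $\Snz$, so the recursion is applicable at each step; because $\fHbM$ is single-valued, the solution exists and is unique. For statement \ref{fact:HbM-upper-bound} it suffices to establish the one-step non-expansion $\maxnorm{\fHbM(X)}\le\maxnorm{X}$. Using the triangle inequality in the numerator and the bound $\lvert X_{jk}\rvert\le\maxnorm{X}$ gives
\[
\lvert Y_{ij}\rvert \le \frac{\sum_k\lvert X_{ik}\rvert\,\lvert X_{jk}\rvert}{\sum_k\lvert X_{ik}\rvert} \le \maxnorm{X},
\]
and taking the maximum over $i,j$ yields $\maxnorm{X(t+1)}\le\maxnorm{X(t)}$; iterating this inequality from $t=0$ produces the remaining bound $\maxnorm{X(t)}\le\maxnorm{X(0)}$.

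Finally, statement \ref{fact:HbM-inv-scaling} rests on the positive homogeneity of $\fHbM$. For any $c>0$ one has $\lvert cX\rvert=c\lvert X\rvert$, hence $D(cX)=c\,D(X)$ and $(cX)(cX)^\top=c^2XX^\top$, so $\fHbM(cX)=\tfrac1c D(X)^{-1}\,c^2XX^\top=c\,\fHbM(X)$. Since $cX_0\in\Snz$ whenever $X_0\in\Snz$ and $c>0$, the solution from $cX_0$ exists and is unique by \ref{fact:HbM-well-defined}; a short induction using $\fHbM(cX(t))=c\,\fHbM(X(t))=cX(t+1)$ identifies it with $cX(t)$. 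I do not expect a genuine obstacle here: the only step demanding real care is the sign-symmetry argument in \ref{fact:HbM-inv-set}, where one must keep track of the fact that row-normalization by the positive diagonal $D(X)$ leaves signs untouched so that symmetry of $XX^\top$ transfers to $\sign(Y)$; the remaining parts are routine induction and bookkeeping.
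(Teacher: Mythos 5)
Your proposal is correct and follows essentially the same route as the paper's proof: well-definedness from positive row sums, diagonal positivity from $\sum_k X_{ik}^2>0$, sign-symmetry from the fact that $\fHbM(X)$ is a positive diagonal rescaling of the symmetric matrix $XX^\top$ (the paper phrases this as $X_{ij}^+ = \frac{\lVert X_{j*}\rVert_1}{\lVert X_{i*}\rVert_1}X_{ji}^+$, which is the same observation), the weighted-average bound for the max norm, and positive homogeneity plus induction for scaling invariance.
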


\begin{proof}
For simplicity, denote $X^+ = \fHbM (X)$. For any $X\in \Snz$, since, for any $i$ and $j$, $X_{ij}^+ = \frac{1}{\lVert X_{i*} \rVert_1}\sum_k X_{ik}X_{jk}$ 
and $\lVert X_{i*} \rVert_1>0$, $\fHbM (X)$ is well-defined. Moreover, 
\begin{align*}
X_{ii}^+ & = \frac{1}{\lVert X_{i*} \rVert_1}\sum_k X_{ik}X_{ik} = \frac{\lVert X_{i*} \rVert_2^2}{\lVert X_{i*}\rVert_1}>0,\quad \text{and }\\
X_{ij}^+ & = \frac{\lVert X_{j*} \rVert_1}{\lVert X_{i*} \rVert_1}X_{ji}^+,\quad \text{for any }i\text{ and }j.
\end{align*}
Therefore, $\fHbM$ maps $\Snz$ to $\Sssymm$. This concludes the proof of statement~\ref{fact:HbM-inv-set}. Statements~\ref{fact:HbM-well-defined} 
is a direct consequence of statement~\ref{fact:HbM-inv-set}, since, for any $t\in \mathbb{Z}_{\ge 0}$, $X(t)\in \Snz$ defines a unique $X(t+1)=\fHbM(X(t))\in \Sssymm$. In addition, 
\begin{align*}
|X_{ij}^+| & \!\le \frac{1}{\lVert X_{i*} \rVert_1}\sum_{k=1}^n \big| X_{ik}X_{jk} \big| \!\le \!\frac{1}{\lVert X_{i*} \rVert_1}\sum_{k=1}^n \!\big| X_{ik}\big| \big| X_{jk} \big|\\
& \le \max_k |X_{jk}|\le \maxnorm{X}
\end{align*}
immediately leads to statement~\ref{fact:HbM-upper-bound}. Finally, statement~\ref{fact:HbM-inv-scaling} is obtained by replacing $X(t)$ with $cX(t)$ on the right-hand side of equation~\eqref{eq:HbM}.
\end{proof}

According to statement (iii) of Proposition~\ref{prop:HbM-finite-time-behav}, for any $a>0$, the set $\Snz\cap [-a,a]^{n\times n}$ is positively invariant under dynamics~\eqref{eq:HbM}. This desired bounded-evolution property makes our model substantially different from some previous models, in which $X(t)$ diverges in finite time~\cite{SAM-JK-RDK-SHS:11,VAT-PVD-PDL:13}.

The theorem below characterizes the set of fixed points of system~\eqref{eq:HbM}, i.e, the steady-state appraisal matrix $X$ satisfying $X = \fHbM(X)$. Fixed points are sociologically interesting because they correspond to the states that can often be observed in the real world.

\smallskip
\begin{theorem}[Fixed points and balance]
  \label{thm:eq-set-QA}
  Consider the dynamical system~\eqref{eq:HbM} in domain $\Snz $. Define
  \begin{equation*}
    \begin{split}
            &\QHbM  \\
            & \text{  }=  \Big{\{}  PYP^\top  \in \Snz  \,\Big|\, P \emph{ is a permutation matrix,}\\
            & \qquad \text{  }Y\emph{ is a block diagonal matrix with blocks of}\\
            & \qquad \text{  }\emph{the form }\alpha bb^\top,\; \alpha > 0, b \in \{ -1,+1 \}^{m},\; m \leq n \Big{\}}.
    \end{split}
  \end{equation*}
  Then 
  \begin{enumerate}
  \item \label{prop:eqSetQA-AllFixedPoints} $\QHbM$ is the set of all the fixed points of~\eqref{eq:HbM},
  \item \label{prop:eqSetQA-StructuralBalance} for any $X\in\QHbM$, $G(X)$ is composed by isolated complete subgraphs that satisfy social balance.
  \end{enumerate}
\end{theorem}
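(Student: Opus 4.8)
The plan is to prove both inclusions of statement~\ref{prop:eqSetQA-AllFixedPoints} and then read off statement~\ref{prop:eqSetQA-StructuralBalance} directly from the block structure. Throughout I write $r_i=\lVert X_{i*}\rVert_1>0$, so that $X=\fHbM(X)$ is equivalent to $\diag(r)X=XX^{\top}$, i.e.\ to the scalar identities $r_iX_{ij}=\sum_l X_{il}X_{jl}=\langle X_{i*},X_{j*}\rangle$ for all $i,j$. I would first dispose of the easy inclusion $\QHbM\subseteq\{\text{fixed points}\}$: since $\fHbM$ is equivariant under permutations and acts blockwise on a block-diagonal matrix, it suffices to check a single block $B=\alpha bb^{\top}$ with $b\in\{-1,+1\}^m$. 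A short computation gives $\lvert B\rvert\vect{1}_m=\alpha m\vect{1}_m$ and $BB^{\top}=\alpha^2 m\,bb^{\top}$, whence $\fHbM(B)=(\alpha m)^{-1}\alpha^2 m\,bb^{\top}=B$.

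The substance is the reverse inclusion: every fixed point lies in $\QHbM$. Here I would first record that a fixed point satisfies $X\in\Sssymm$ by Proposition~\ref{prop:HbM-finite-time-behav}\ref{fact:HbM-inv-set}, so $\sign(X)=\sign(X)^{\top}$ and $X_{ii}>0$. I would then argue component by component on the support graph (vertices $\until{n}$, an edge wherever $X_{ij}\neq0$), which is legitimate because the defining identities decouple across components. The heart of the argument is a maximal-entry analysis: letting $M$ be the largest magnitude attained on a fixed component $C$ and picking $(a,b)$ with $\lvert X_{ab}\rvert=M$, the chain $Mr_a=\lvert\sum_l X_{al}X_{bl}\rvert\le\sum_l\lvert X_{al}\rvert\lvert X_{bl}\rvert\le M\sum_l\lvert X_{al}\rvert=Mr_a$ must hold with equality. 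Equality in the second inequality forces $\lvert X_{bl}\rvert=M$ for every $l$ in the support of row $a$; taking $l=b$ yields $X_{bb}=M$, and re-running the same computation with the pair $(b,b)$ and then along edges of $C$ propagates, by connectivity, the conclusion that every row indexed in $C$ has all its nonzero entries of magnitude exactly $M$, has diagonal entry $M$, and --- crucially --- shares one common support. Since each diagonal entry is nonzero, this common support must be all of $C$, so $C$ induces a complete subgraph and $r_i=\lvert C\rvert\,M$ for every $i\in C$.

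With magnitudes and support pinned down, the sign pattern follows cheaply. Writing $s_{ij}=\sign(X_{ij})$ and substituting $\lvert X_{ij}\rvert=M$ and $r_i=mM$ into the identity gives $m\,s_{ij}=\sum_{l\in C}s_{il}s_{jl}$; a sum of $m$ terms each equal to $\pm1$ can equal $\pm m$ only if all terms coincide, so $s_{il}s_{jl}=s_{ij}$ for all $l$. Fixing a base index $i_0\in C$ and setting $b_i=s_{i_0 i}$ then yields $s_{ij}=b_ib_j$, so $X$ restricted to $C$ equals $M\,bb^{\top}$. Reindexing the components contiguously by a permutation $P$ exhibits $X=PYP^{\top}\in\QHbM$, completing statement~\ref{prop:eqSetQA-AllFixedPoints}. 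Statement~\ref{prop:eqSetQA-StructuralBalance} is then immediate: each block $\alpha bb^{\top}$ has positive diagonal (condition~(S1)) and satisfies $\sign(X_{ij})\sign(X_{jk})\sign(X_{ki})=(b_ib_jb_k)^2=1$ (condition~(S2)), while the off-block entries vanish, so $G(X)$ is a disjoint union of complete, socially balanced subgraphs.

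The step I expect to be the genuine obstacle --- and the one I would write most carefully --- is the equality analysis of the triangle-inequality chain together with its propagation across a component: converting ``equality is forced at the maximal entry'' into the global rigidity that all in-component magnitudes are equal and that the row supports are constant on each component. Everything else (the forward inclusion, the sign computation, and part~\ref{prop:eqSetQA-StructuralBalance}) is routine once this rigidity is established.
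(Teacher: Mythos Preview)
Your proposal is correct and rests on the same key device as the paper --- forcing equality in the chain $r_i\lvert X_{ij}\rvert=\lvert\langle X_{i*},X_{j*}\rangle\rvert\le\sum_l\lvert X_{il}\rvert\lvert X_{jl}\rvert\le r_i\,M$ at a maximal-magnitude entry --- but organizes the reverse inclusion differently. The paper argues by induction on $n$: it locates the \emph{global} maximum $\maxnorm{X}$, splits into Case~1 (the maximum is an isolated diagonal entry) versus Case~2 (an off-diagonal maximum exists), shows that the indices involved span a block $\alpha bb^\top$, peels that block off via a permutation, and invokes the inductive hypothesis on the remaining principal submatrix. You instead observe at the outset that the fixed-point identities decouple across connected components of the support graph (legitimate because $X\in\Sssymm$ makes this graph undirected and the inner products vanish across components), and then run the equality analysis with the \emph{componentwise} maximum to conclude directly that each component is a full block. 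Your route is somewhat cleaner --- it eliminates both the induction and the Case~1/Case~2 split --- while the paper's recursive peeling makes the ``strip off one block and repeat'' picture more explicit. Both arguments stand or fall on exactly the propagation step you flagged as the crux; when you write it out, the cleanest phrasing is: from $\lvert X_{ab}\rvert=M$ one gets $\lvert X_{bl}\rvert=M$ for all $l\in\theta_a$, hence $\theta_a\subseteq\theta_b$, and sign symmetry gives $\lvert X_{ba}\rvert=M$, whence by the same argument $\theta_b\subseteq\theta_a$; iterating over $l\in\theta_a$ then yields $\theta_l=\theta_a$ for every such $l$, which forces $\theta_a$ to coincide with the component $C$.
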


\begin{proof}
We first prove that any $X^*\in \QHbM$ is a fixed point of system~\eqref{eq:HbM}. For any $\alpha>0$ and $b\in \{-1,+1\}^n$, the matrix $Y=\alpha bb^{\top}$ satisfies
\begin{align*}
\fHbM(Y) = \diag(n\alpha \mathbbm{1}_n)^{-1}\alpha^2 bb^{\top}bb^{\top} = \alpha bb^{\top} = Y.
\end{align*} 
This arguments extend to block diagonal matrices $Y$. By the definition of $\fHbM$, for any block diagonal matrix $Y=\diag(Y^{(1)},\dots,Y^{(K)})$, $Y=\fHbM(Y)$ if and only if $Y^{(i)} = \diag(|Y^{(i)}|\mathbbm{1}_n)^{-1} Y^{(i)} {Y^{(i)}}^{\top}$ for any $i$. Therefore, $Y$ is a fixed point of system~\eqref{eq:HbM} if each $Y^{(i)}$ in $Y=\diag(Y^{(1)},\dots,Y^{(K)})$ is a $n_i\times n_i$ matrix of the form $\alpha_i b^{(i)} {b^{(i)}}^{\top}$, with $\alpha_i>0$, $b^{(i)}\in \{-1,+1\}^{n_i}$, and $n_1+\dots+n_K=n$.
Moreover, given any fixed point $Y$, for any permutation matrix $P\in \real^{n\times n}$, 
\begin{align*}
PYP^{\top} & = P\diag(|Y|\mathbbm{1}_n)^{-1}YY^{\top}P^{\top}\\
           & = \diag(|PYP^{\top}|\mathbbm{1}_n)^{-1}(PYP^{\top})(PYP^{\top})^{\top} \\
           & = \fHbM(PYP^{\top}).
\end{align*}
Therefore, any $X^*\in \QHbM$ is a fixed point of~\eqref{eq:HbM}.

Now we prove by induction that $\QHbM$ is the set of all the fixed points of system~\eqref{eq:HbM}. For the trivial case of $n=1$, $\QHbM$ represents the set of all the positive scalars and one can easily check that any positive scalar $X$ is a fixed point of system~\eqref{eq:HbM} with $n=1$. Suppose statement~\ref{prop:eqSetQA-AllFixedPoints} holds for any system with dimension $\tilde{n}<n$. 
For system~\eqref{eq:HbM} with dimension $n$, suppose $X$ is a fixed point, i.e., $X=\fHbM(X)$. For any $i,\,j\in \{1,\dots,n\}$, by comparing the $(i,j)-$th and the $(j,i)-$th equations of $X=\fHbM(X)$, we conclude that $X_{ij}$ and $X_{ji}$ always have the same sign.
In addition, since $X_{ii}=\sum_{k=1}^n X_{ik}^2 \big/ \lVert X_{i*} \rVert_1$, we have $X_{ii}>0$ for any $i$.
Since $X$ is a fixed point of $\fHbM$, we have that, for any $i,j\in \{1,\dots,n\}$,
\begin{align*}
|X_{ij}| & = \frac{1}{\lVert X_{i*} \rVert_1} \Big{|} \sum_k X_{ik}X_{jk} \Big{|}\\
         & \le \frac{1}{\lVert X_{i*} \rVert_1} \sum_k |X_{ik}||X_{jk}|\le \maxnorm{X} .
\end{align*}
Moreover, there exists $(i,j)$ such that $|X_{ij}|=\maxnorm{X}$. For any such $(i,j)$, either of the following two cases hold:

Case 1: $i=j$ and there does not exist $k\neq i$ such that
$|X_{ik}|=\maxnorm{X}$. In this case,
  $|X_{ii}|=\maxnorm{X}$. Since
\begin{align*}
|X_{ii}| & =\frac{1}{\lVert X_{i*} \rVert_1}\Big| \sum_k X_{ik}X_{ik} \Big|\\
& \le \frac{1}{\lVert X_{i*} \rVert_1}\sum_k |X_{ik}||X_{ik}| \le \maxnorm{X},
\end{align*}
in order for $|X_{ii}|=\maxnorm{X}$ to hold, $X_{i*}$ must satisfy
$|X_{ik}|=\maxnorm{X}$, for any $k$ such that $X_{ik}\neq 0$. By the definition
of Case~1, we conclude that there does not exist $k\neq i$ such that
$X_{ik}\neq 0$.  Therefore, there exists a permutation matrix $P$ such
that
\begin{equation*}
PXP^{\top} = 
\begin{bmatrix}
\maxnorm{X} & \vect{0}_{n-1}^{\top} \\
\vect{0}_{n-1} & \tilde{X}_{(n\!-\!1)\times (n\!-\!1)}
\end{bmatrix}.
\end{equation*}
Since $PXP^{\top}$ is also a fixed point of system~\eqref{eq:HbM}, one can check that $\tilde{X}$ satisfies $\tilde{X} = \diag(|\tilde{X}|\vect{1}_n)^{-1}\tilde{X}\tilde{X}^{\top}$.
Therefore, $\tilde{X}$ is a fixed point of system~\eqref{eq:HbM} with dimension $n\!-\!1$. Since we have assumed that statement~\ref{prop:eqSetQA-AllFixedPoints} holds for dimension $\tilde{n}<n$, there exists an $(n\!-\!1)\!\times\! (n\!-\!1)$ permutation matrix $\tilde{P}$ and a block diagonal $\tilde{Y}$, with blocks of the form $\alpha bb^{\top}\!$, where $\alpha>0$, $b\in\{-1,+1\}^{m}$, $m<n\!-\!1$, such that $\tilde{X}=\tilde{P}\tilde{Y}\tilde{P}^{\top}$. Therefore,
\begin{equation*}
X = P^{\top}\!
\begin{bmatrix}
1 & \hspace{-0.1cm}\vect{0}_{n-1}^{\top} \\
\vect{0}_{n-1} & \hspace{-0.1cm}\tilde{P}
\end{bmatrix}
\begin{bmatrix}
\maxnorm{X} & \hspace{-0.05cm}\vect{0}_{n-1}^{\top}\\
\vect{0}_{n-1} & \hspace{-0.05cm}\tilde{Y}
\end{bmatrix}
\begin{bmatrix}
1 & \hspace{-0.1cm}\vect{0}_{n-1}^{\top}\\
\vect{0}_{n-1} & \hspace{-0.1cm}\tilde{P}
\end{bmatrix}^{\top}\!
P.
\end{equation*}
The matrix
$
P^{\top}
\begin{bmatrix}
1 & \vect{0}_{n-1}^{\top} \\
\vect{0}_{n-1} & \tilde{P}
\end{bmatrix}
$
is also a permutation matrix. Therefore $X\in \QHbM$.

Case 2: $j\neq i$ and $|X_{ij}|=\maxnorm{X}$. We first define some notations used in the following proof: For any $k$, let $\theta_k = \setdef{\ell}{X_{k\ell} \neq 0}$ and $|\theta_k|$ be the cardinality of the set $\theta_k$. Note that, since $X=\fHbM(X)\in \Sssymm $, $k$ is always in $\theta_k$ and $X_{kk}>0$. Let $X_{\ell*,\theta_k}\in \real^{1\times |\theta_k|}$ be the $\ell$-th row vector of $X$ with all the $X_{\ell p}$ entries such that $p\notin \theta_k$ removed.

We point out a general result that, for any $k$ and $\ell$, if
\begin{equation*}
|X_{k\ell}| = \frac{1}{\lVert X_{k*} \rVert_1} \Big| \sum_{p=1}^n X_{kp}X_{\ell p} \Big| = \maxnorm{X},
\end{equation*} 
then, for the second equality to hold, $X$ must satisfy that: 1) $\theta_k\subset \theta_l$; 2) $|X_{\ell p}|=\maxnorm{X}$ for any $p\in \theta_k$; 3) $\sign(X_{\ell*,\theta_k})=\pm \sign(X_{k*,\theta_k})$.
Therefore, for the $i,j$ indexes such that $|X_{ij}|=\maxnorm{X}$ and $i\neq j$, we have: $|X_{jk}|=\maxnorm{X}$, for any $k\in \theta_i$; $\theta_i \subset \theta_j$; and $\sign(X_{j*,\theta_i})=\pm\sign(X_{i*,\theta_i})$. Since $i\in \theta_i$ and $X=\fHbM(X)$, we obtain $|\fHbM(X)_{ji}|=|X_{ji}|=\maxnorm{X}$. Therefore, $|\fHbM(X)_{ik}|=|X_{ik}|=\maxnorm{X}$, for any $k\in \theta_j$, and $\theta_j \subset \theta_i$, which in turn leads to $\theta_i=\theta_j$ and $|X_{ik}|=\maxnorm{X}$ for any $k\in \theta_i$. Therefore, for any $k\in \theta_i$, $|\fHbM(X)_{ik}|=\maxnorm{x}$, which implies $|X_{k\ell}|=\maxnorm{X}$ for any $l\in \theta_i$. Since $|\fHbM(X)_{k\ell}|=|X_{k\ell}|$, we further obtain that $\theta_k \subset \theta_l$ and $\sign(X_{k*,\theta_k})=\pm \sign(X_{\ell*,\theta_k})$. Moreover, due to the fact that the indexes $k$ and $l$ are interchangeable, we conclude that, for any $k,l\in \theta_i$: a) $\theta_k=\theta_l=\theta_i$; b) $|X_{k\ell}|=\maxnorm{X}$; c) $\sign(X_{k*})=\pm \sign(X_{\ell*})$.

If $|\theta_i|=n$, let $\alpha=X_{11}$ and $b=\sign(X_{1*})^{\top}$, then we have $X=\alpha bb^{\top}$. If $|\theta_i|<n$, there exists a permutation matrix $P$ such that 
\begin{equation*}
PXP^{\top} = 
\begin{bmatrix}
X^{(\theta_i)} & \vect{0}_{|\theta_i|\times (n-|\theta_i|)} \\
\vect{0}_{(n-|\theta_i|)\times |\theta_i|} & \tilde{X}
\end{bmatrix},
\end{equation*}
where $X^{(\theta_i)}$ is a $|\theta_i|\times |\theta_i|$ matrix. Moreover, $X^{(\theta_i)}=\maxnorm{X} bb^{\top}$, where $b=\sign(X_{i*,\theta_i})^{\top}$. Following the same line of argument for Case 1, we know that $\tilde{X}$ is of the form $\tilde{P}\tilde{Y}\tilde{P}^{\top}$ and thereby $X\in \QHbM$. This concludes the proof for statement~\ref{prop:eqSetQA-AllFixedPoints}.

For any $X^*\in \QHbM$, there exists a permutation matrix $P$ and a block diagonal matrix $Y=\diag(Y^{(1)},\dots,Y^{(K)})$ such that $X^*=PYP^{\top}$. Note that $G(Y)$ has exactly the same topology as $G(X)$, but with the nodes re-indexed. Therefore, we only need to analyze the structure of $G(Y)$. The graph $G(Y)$ is made up of $K$ isolated complete subgraphs and $Y^{(i)} = \alpha_i b^{(i)} {b^{(i)}}^{\top}$ for each such subgraph $G(Y^{(i)})$, where $b^{(i)}=(b^{(i)}_{1},\dots,b^{(i)}_{n_i})^{\top}$. Therefore, according to Lemma~\ref{numCond2}, each subgraph $G(Y^{(i)})$ satisfies social balance.
This concludes the proof for statement~\ref{prop:eqSetQA-StructuralBalance}. 
\end{proof}

\begin{remark}[Social balance with multiple isolated subgraphs]\label{remark:k-block-balance}
An appraisal matrix $X\in \QHbM$ can be a block-diagonal matrix $\diag(X_1,\dots,X_k)$ and thus corresponds to an appraisal network $G(X)$ composed of $k$ isolated subgraphs, each of which satisfies social balance as in Definition~\ref{defSB}. With the notion of social balance extended to graphs with multiple isolated subgraphs, in terms of sign pattern, the set of fixed points $X$ of the homophily-based model~\eqref{eq:HbM} corresponds to exactly the set of all the possible structurally balanced configurations of the appraisal network $G(X)$. Such characterization of fixed points is impossible in the previous continuous-time models~\cite{SAM-JK-RDK-SHS:11,VAT-PVD-PDL:13} since those models diverge in finite time. Moreover, for any $X\in \QHbM$ such that $G(X)$ has $k$ isolated subgraphs, $X$ is a rank-$k$ matrix.
\end{remark}

Before presenting the main results on the convergence of the appraisal matrix $X(t)$ to social balance, we define a property of $X(t)$ as the solution to equation~\ref{eq:HbM}.
\smallskip

\begin{definition}[Non-vanishing appraisal condition]\label{def:lower-bounded-appraisal-condition}
A solution $X(t)$ satisfies the non-vanishing appraisal condition if $\liminf\limits_{t\to\infty}
\min\limits_{i,j}|X_{ij}(t)| > 0$.
\end{definition}
\smallskip

\begin{theorem}[Convergence and social balance in HbM]
\label{theoremMain}
Consider the homophily-based model given by equation~\eqref{eq:HbM}. The following statements hold:
\begin{enumerate}
\item Each element in $\QHbM$ of rank one is a locally stable fixed point of $\fHbM$;\label{statement:Thm-HbM-local-stability}
\item For any $X(0)\in \Snz$, the following three statements are equivalent:
   \begin{enumerate}[label=(\alph*)]
   \item the solution $X(t)$ satisfies the non-vanishing appraisal condition;
   \item there exists $t_0>0$ such that $G(X(t))$ satisfies social balance for all $t\ge t_0$;
   \item there exists $X^*\in \QHbM$ of rank one such that $\lim_{t\to \infty} X(t)=X^*$.
   \end{enumerate}
\end{enumerate}
\end{theorem}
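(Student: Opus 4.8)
The plan is to prove statement~1 by linearization and to establish statement~2 through the cycle (c)$\Rightarrow$(a)$\Rightarrow$(b)$\Rightarrow$(c).

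\textbf{Statement 1.} A rank-one element of $\QHbM$ has the form $X^*=\alpha bb^\top$ with $\alpha>0$ and $b\in\{-1,+1\}^n$. First I would record a gauge symmetry: with $S=\diag(b)$ (so $S^2=I$), one has $\fHbM(SXS)=S\fHbM(X)S$, since $|SXS|=|X|$ entrywise and $S$ commutes with every diagonal matrix. Thus the conjugation $X\mapsto SXS$ preserves the dynamics and carries $X^*$ to the all-positive fixed point $\alpha\mathbbm{1}\mathbbm{1}^\top$, so it suffices to analyze stability there. In a neighborhood of $\alpha\mathbbm{1}\mathbbm{1}^\top$ all entries are positive, so $|X|=X$ and $\fHbM$ is smooth; I would linearize $\fHbM(\alpha\mathbbm{1}\mathbbm{1}^\top+E)$ to first order and obtain the linear map $E\mapsto\tfrac{1}{n}\mathbbm{1}(E\mathbbm{1})^\top$, whose spectrum is $\{1,0,\dots,0\}$. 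The single unit eigenvalue has eigenvector $\mathbbm{1}\mathbbm{1}^\top$, which is exactly the scaling direction and is forced to be neutral by the scale invariance (statement~\ref{fact:HbM-inv-scaling} of Proposition~\ref{prop:HbM-finite-time-behav}); all remaining eigenvalues vanish. Hence the equilibrium line $\{\alpha\mathbbm{1}\mathbbm{1}^\top:\alpha>0\}$ is normally attracting and $X^*$ is Lyapunov stable.

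\textbf{The easy and medium arrows.} The implication (c)$\Rightarrow$(a) is immediate: if $X(t)\to\alpha bb^\top$ then every entry has magnitude tending to $\alpha>0$, so $\liminf_t\min_{i,j}|X_{ij}(t)|=\alpha>0$. For (b)$\Rightarrow$(c) I would first note that a balanced sign pattern is invariant: if $\sign(X(t))=bb^\top$ then $X_{ij}(t+1)=\lVert X_{i*}(t)\rVert_1^{-1}\,b_ib_j\sum_k|X_{ik}(t)||X_{jk}(t)|$ has sign $b_ib_j$ again, so $bb^\top$ persists for all $t\ge t_0$. The gauge map $S=\diag(b)$ then reduces the problem to the all-positive cone, where $X(t+1)=W(t)X(t)^\top$ with $W(t)=\diag(X(t)\mathbbm{1})^{-1}X(t)$ row-stochastic and positive. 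There $\min_{i,j}X_{ij}(t)$ is nondecreasing and $\max_{i,j}X_{ij}(t)$ nonincreasing, so the entries remain in a fixed compact positive interval and the entries of $W(t)$ are bounded below by some $\lambda>0$. The two-step map $X(t+2)=W(t+1)X(t)W(t)^\top$ is then a two-sided averaging by matrices with a uniform positive lower bound, which contracts the spread $\max_{i,j}X_{ij}-\min_{i,j}X_{ij}$ geometrically; hence $X(t)$ converges to $\alpha\mathbbm{1}\mathbbm{1}^\top$, i.e. to $\alpha bb^\top\in\QHbM$ in the original coordinates.

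\textbf{The hard arrow and the main obstacle.} The crux is (a)$\Rightarrow$(b). Under (a) there exist $\epsilon>0$ and $t_1$ with $|X_{ij}(t)|\in[\epsilon,\maxnorm{X(0)}]$ for all $i,j$ and $t\ge t_1$; since $\sign(X(t+1))=\sign(X(t)X(t)^\top)$, this forces $|\langle X_{i*}(t),X_{j*}(t)\rangle|=|X_{ij}(t+1)|\,\lVert X_{i*}(t)\rVert_1\ge n\epsilon^2$, so no two rows ever become near-orthogonal. I would then pass to the $\omega$-limit set $\Omega$ of the trajectory, which is nonempty, compact, invariant, has constant (limiting) max norm $M_\infty\ge\epsilon$, and on which every matrix has all entries bounded below by $\epsilon$ in magnitude. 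The goal is to show every $X\in\Omega$ is a rank-one balanced matrix $M_\infty\,bb^\top$; since its entries are nonzero, this makes $X(t)$ eventually sign-balanced, which is exactly (b).

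The main obstacle is precisely this last step. The natural tool is the analysis of the max-achieving entries from the proof of Theorem~\ref{thm:eq-set-QA}: because all entries are nonzero the index sets $\theta_k$ are full, and the ``general result'' there shows that whenever $\fHbM(X)$ attains its max norm at $(k,\ell)$, the row $\ell$ of $X$ is entirely of magnitude $M_\infty$ and is sign-aligned (up to a global $\pm$) with row $k$. This already produces at least one all-magnitude-$M_\infty$ row in every element of $\Omega$. The delicate part, which I expect to require the full strength of the invariance of $\Omega$ together with the uniform lower bound $n\epsilon^2$ on the row inner products, is to bootstrap from one such row to all rows and to deduce sign alignment of the entire matrix, thereby excluding the ``splitting'' limit configurations (such as the decaying unbalanced triangle) in which some entries vanish. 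Once (a)$\Rightarrow$(b) is secured, the cycle closes via (b)$\Rightarrow$(c)$\Rightarrow$(a).
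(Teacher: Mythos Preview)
Your overall architecture is sound and tracks the paper closely: same implication cycle, same $\omega$-limit/LaSalle framework for the hard arrow, same sign-invariance observation for (b)$\Rightarrow$(c). Two points deserve comment.

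\textbf{Statement 1.} Your linearization is correct (the Jacobian at $\alpha\mathbbm{1}\mathbbm{1}^\top$ is indeed $E\mapsto\tfrac1n\mathbbm{1}(E\mathbbm{1})^\top$ with spectrum $\{1,0\}$), but the presence of the unit eigenvalue means Lyapunov stability does not follow from linearization alone; you would need to actually invoke a normally-hyperbolic/center-manifold theorem. The paper avoids this entirely with a direct elementary argument: once $\sign(X(0))=\sign(X^*)$, it shows $\sign(X(t))$ is frozen and both $\min_{ij}|X_{ij}(t)|$ and $\max_{ij}|X_{ij}(t)|$ are monotone, so $|X_{ij}(t)-\alpha|\le\maxnorm{X(0)-X^*}$ for all $t$. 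This is shorter and sidesteps the spectral subtlety.

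\textbf{The bootstrap in (a)$\Rightarrow$(b).} You correctly isolate this as the crux and set it up exactly as the paper does, but the ``delicate part'' you anticipate is in fact a one-line trick, and it does not use the inner-product lower bound $n\epsilon^2$ at all. Here is the missing step. On the invariant set $\Omega$ the max norm is the constant $M_\infty$. You already observed that if $X\in\Omega$ and $|\fHbM(X)_{k\ell}|=M_\infty$, then row $\ell$ of $X$ has all entries of magnitude $M_\infty$ and is sign-aligned with row $k$. Now apply this with $X$ replaced by $X^+=\fHbM(X)$, which also lies in $\Omega$: there is some row $\tilde{\jmath}$ of $X^+$ with $|X^+_{\tilde{\jmath}p}|=M_\infty$ for \emph{every} $p$. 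Feeding each such $(\tilde{\jmath},p)$ back through the equality analysis (now for $X^+=\fHbM(X)$) forces every row $p$ of $X$ to have all entries of magnitude $M_\infty$ and to be sign-aligned with row $\tilde{\jmath}$. Positive diagonals then pin down $X=M_\infty\,bb^\top$. That is the entire bootstrap: use invariance once to pass from $X$ to $X^+$, and exploit that the max-achieving row of $X^+$ hits every column.

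Your gauge reduction $X\mapsto SXS$ and the two-step identity $X(t+2)=W(t+1)X(t)W(t)^\top$ for (b)$\Rightarrow$(c) are a pleasant alternative to the paper's explicit $\maxnorm{\cdot}-\minnorm{\cdot}$ contraction estimate; both give the same geometric rate.
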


\emph{Proof: }
For simplicity of notations, let $\minnorm{X}=\min_{k,l}|X_{k,l}|$. We start by proving the following two claims. For any given $t_0\ge 0$, if all the entries of $X(t_0)$ are non-zero and $G(X(t_0))$ satisfies social balance, then,
\begin{enumerate}[label={C.\arabic*)}]
\item\label{claim-modelA-main-theorem-1} for any $t\ge t_0$, $G(X(t))$ satisfies social balance and $\sign(X(t))=\sign(X(t_0))$;
\item\label{claim-modelA-main-theorem-2} for any $t\ge t_0$, $\maxnorm{X(t)}$ is non-increasing and $\minnorm{X(t)}$ is non-decreasing.
\end{enumerate}
To prove claim~\ref{claim-modelA-main-theorem-1}, it suffices to prove that $G(X(t_0+1))$ satisfies social balance and $\sign(X(t_0+1))=\sign(X(t_0))$, as the cases for $t\ge t_0+1$ follow by induction. For any $i$ and $j$, since $G(X(t_0))$ satisfies social balance, according to Lemma~\ref{numCond2}, we have $\sign(X_{i*}(t_0))=\pm \sign(X_{j*}(t_0))$. In addition, we have $X_{jj}(t_0)>0$ for any $j$. Therefore,
\begin{align*}
\sign\!\big( \! X_{ij}(t_0\!+\!1)\big) \! & =\! \sign\!\Big(\frac{1}{\lVert X_{i*}(t_0) \rVert_1} \!\sum_{k=1}^n X_{ik}(t_0)X_{jk}(t_0)\Big)\\
&\! =\sign\!\big( X_{ij}(t_0)X_{jj}(t_0) \big)\!=\!\sign\!\big( X_{ij}(t_0)\big),
\end{align*} 
for any $i$ and $j$. This concludes the proof for claim~\ref{claim-modelA-main-theorem-1}. For any $t\ge t_0$, since $G(X(t))$ satisfies social balance, 
\begin{equation*}
|X_{ij}(t+1)| = \frac{1}{\lVert X_{i*}(t) \rVert_1}\sum_{k=1}^n |X_{ik}(t)||X_{jk}(t)|\text{ for any }i,j,
\end{equation*}
we have
$\minnorm{X(t+1)} \ge \minnorm{X(t)} \ge \minnorm{X(t_0)}$ and
$\maxnorm{X(t+1)} \le \maxnorm{X(t)} \le \maxnorm{X(t_0)}$.

Now we prove statement~\ref{statement:Thm-HbM-local-stability}, i.e., each $X^*\in \QHbM$ with rank 1 is locally stable. Let $X^*=\alpha bb^{\top}$, where $\alpha>0$ and $b\in \{-1,+1\}^n$. For any matrix $\Delta\in \real^{n\times n}$ such that $\maxnorm{\Delta}=\zeta<\alpha$, we have $\sign(X^*+\Delta)=\sign(X^*)$. Due to claim~\ref{claim-modelA-main-theorem-1} and~\ref{claim-modelA-main-theorem-2}, we know that, for $X(0)=X^*+\Delta$, $X(t)$ satisfies that, for any $t\ge 0$: (1) $\sign(X(t))=\sign(X(0))=\sign(X^*)$; (2) $\alpha-\zeta \le \minnorm{X(t)}\le \maxnorm{X(t)}\le \alpha+\zeta$. 
Therefore, for any $i$ and $j$, $X_{ij}(t)$ is of the form $\alpha_{ij}(t)\sign(X^*_{ij})$, where $0<\alpha-\zeta \le \alpha_{ij}(t)\le \alpha+\zeta$. We thereby have
\begin{align*}
 \maxnorm{X(t)-X^*}  & = \max_{ij} \big| \alpha_{ij}(t)\sign(X^*_{ij}) - \alpha \sign(X^*_{ij}) \big|\\
& = \max_{ij} | \alpha_{ij}(t) - \alpha |\le \zeta.
\end{align*}
Therefore, for any $\epsilon>0$, there exists $\zeta=\min\{ \frac{\alpha}{2}, \frac{\epsilon}{2} \}$ such that, for any $X(0)$ satisfying $\maxnorm{ X(0)-X^*}<\zeta$, $\maxnorm{X(t)\!-\!X^*}\!<\!\epsilon$ for any $t\!\ge\! 0$, i.e., $X^*$ is locally stable.

Now we prove (ii)(a)$\,\Rightarrow$\,(ii)(b). We first establish the convergence of the solution $X(t)$ to some set of structurally balanced states via the LaSalle invariance principle. For simplicity, denote $X^{+}=\fHbM (X)$. The map $\fHbM(X)$ is continuous for any $X\in \Sssymm $ and, by Proposition~\ref{prop:HbM-finite-time-behav}, for any given $X(0)\in \Sssymm $, $\maxnorm{X(t)}\le \maxnorm{X(0)}$ for any $t\in \mathbb{Z}_{\ge 0}$. In addition, letting $\delta = \liminf\limits_{t\to\infty} \min\limits_{i,j}|X_{ij}(t)|>0$, we see that there exists $\tilde{t}\in \mathbb{Z}_{\ge 0}$ such that $\min\limits_{i,j}|X_{ij}(t)|\ge \delta/2$ for any $t\ge \tilde{t}$.
Therefore, the set 
\begin{align*}
G_c =  \Big{\{}  X\in \Sssymm \,\Big|\, & \min_{i,j}|X_{ij}|\ge \delta/2,\\
                & \maxnorm{X}\le \maxnorm{X(0)}\Big{\}}
\end{align*}
is a compact subset of $\Sssymm $ and $X(t)\in G_c$ for any $t\ge
\tilde{t}$. Thirdly, define $V(X)=\maxnorm{X}$. The
function $V$ is continuous on $\Sssymm $ and, by
Proposition~\ref{prop:HbM-finite-time-behav}, satisfies $V( X^+
)-V(X)\le 0$ for any $X\in \Sssymm $. According to the extended
LaSalle invariance principle  in Theorem~2
of~\cite{WM-FB:17s}, $X(t)$ converges to the largest invariant set $M$
of the set $E=\setdef{X\in G_c}{V( X^+ )-V(X)=0}$.

Now we characterize the largest invariant set $M$. For any $X\in M\subset E$, $V(X^{+})=V(X)=\maxnorm{X}$. Suppose $\lvert X^{+}_{ij}\rvert=\max\limits_{k,\ell}\lvert X^{+}_{k\ell}\rvert$. Since $X^+=\fHbM(X)$, we have
\vspace{-0.2cm}
\begin{equation}
\label{ineq11}
\begin{split}
\lvert X^{+}_{ij}\rvert 
&\leq \frac{1}{\norm{X_{i*}}_{1}}\sum\limits_{\ell=1}^{n}\lvert X_{i\ell}\rvert\lvert X_{j\ell}\rvert\le \maxnorm{X}.
\end{split}    
\end{equation}
In order for all these inequalities to hold with equality and noticing that $|X_{i\ell}|>0$ for any $\ell$ since $X\in G_c$, $X$ must satisfy that
\begin{enumerate}[label=(\alph*)]
   \item $X_{i*}$ and $X_{j*}$ have the same or opposite sign pattern, i.e., $\sign{(X_{i*})}=\pm\sign{(X_{j*})}$, \label{condM1}
   \item All entries of $X_{j*}$ have the magnitude $\maxnorm{X}$. \label{condM2}
\end{enumerate}
Therefore, for any $X\in E$, there exist some $i$ and $j$ such that the
aforementioned conditions~(a) and~(b) hold. Moreover, since the set $M$ is
invariant, $X\in M$ implies $X^+\in M\subset E$. Applying
Condition~\ref{condM2} to $X^+$, there exists a $\tilde{j}$ such that, for
any $p$, $\lvert X^{+}_{\tilde{j}p}\rvert=\maxnorm{X^{+}}=\maxnorm{X}$. In
order for $\lvert X^{+}_{\tilde{j}p}\rvert=\maxnorm{X}$ to hold, following
the same argument on the conditions such that the
inequalities~\eqref{ineq11} become equalities, we know that, for any $p$,
$\sign{(X_{\tilde{j}*})}=\pm\sign{(X_{p*})}$ and $\lvert
X_{pk}\rvert=\maxnorm{X}$ for any $k$. As these relationships hold for any
$p$, we conclude that for any $i,j \in\until{n}$, $X_{i*}$ and $X_{j*}$
must have the same or the opposite sign pattern. Let $\alpha=\maxnorm{X}$
and $b=\sign(X_{1*}^{\top})$. Each row of $X$ is thereby equal to either
$\alpha b^{\top}$ or $-\alpha b^{\top}$. Therefore, $X$ is of the form
$X=\alpha cb^{\top}$, where $c\in \{-1,1\}^n$. Moreover, since all the
diagonal entries of $X$ are positive, the column vector $c$ satisfies
$c_ib_i=1$ for any $i$, which implies $c=b$. In short, we have proved that
$X\in M$ leads to $X=\alpha bb^{\top}$. In addition, by
Theorem~\ref{thm:eq-set-QA}, any matrix $X=\alpha bb^{\top}$, with
$\alpha>0$ and $b\in \{-1,1\}^n$, is a fixed point of $\fHbM$ and is thus
invariant. Therefore, we conclude the compactness of
\begin{equation*}
M = \Big{\{} X=\alpha bb^{\top}\,\Big|\, \frac{\delta}{2} \le \alpha \le \maxnorm{X(0)}, b\in \{-1,1\}^n \Big{\}}.
\end{equation*}

For any $\hat{X}\in M$, since $\hat{X}$ satisfies social balance (see Theorem~\ref{thm:eq-set-QA}) and $\min_{i,j}|\hat{X}_{ij}| \ge \delta/2 >0$, there exists an open neighbor set defined as $\mathcal{U}(\hat{X})=\setdef{X = \hat{X}+\Delta}{\maxnorm{\Delta}<\min\limits_{i,j}|\hat{X}_{ij}|}$ such that any $X\in \mathcal{U}(\hat{X})$ satisfies social balance. According to Heine-Borel theorem, there exists a finite set $\{\hat{X}_1,\dots,\hat{X}_K\}\subset M$ such that $M\subset \cup_{k=1}^K \mathcal{U}(\hat{X}_k)$. Since $\cup_{k=1}^K \mathcal{U}(\hat{X}_k)$ is an open set, there exists $\epsilon>0$ such that the neighbor set of $M$, defined as $\mathcal{U}(M,\epsilon)=\setdef{X\in \Sssymm }{\maxnorm{X-M}<\epsilon}$,
satisfies that $\mathcal{U}(M,\epsilon)\subset \cup_{k=1}^K \mathcal{U}(\hat{X}_k)$ and thereby any $X\in \mathcal{U}(M,\epsilon)$ satisfies social balance.

Since $X(t)\to M$ as $t\to \infty$, there exists $t_0\in \mathbb{Z}_{\ge 0}$ such that $X(t)\in \mathcal{U}(M,\epsilon)$ for any $t\ge t_0$. Therefore, $X(t)$ satisfies social balance for any $t\ge t_0$, which concludes the proof for (ii)(a)$\,\Rightarrow\,$(ii)(b).

Now we prove (ii)(b)$\,\Rightarrow\,$(ii)(c). Suppose $G\big(X(t_0)\big)$ satisfies social balance for some $t_0>0$. If $\maxnorm{X(t_0)}=\minnorm{X(t_0)}$, then there exists some $\alpha>0$ such that $X(t_0)=\alpha B$, where $B\in \{-1,1\}^{n\times n}$. Since $G\big( X(t_0) \big)$ satisfies social balance, we have $B_{ii}>0$ and $B_{j*}=\pm B_{1*}$, which in turn implies that $B=B_{1*}^{\top}B_{1*}=\sign(X_{1*}(t_0))^{\top}\sign(X_{1*}(t_0))$. Therefore, $X(t_0)$ is already a rank-one fixed point in the set $\QHbM$.

Suppose $G\big(X(t_0)\big)$ satisfies social balance but $\maxnorm{X(t_0)}>\minnorm{X(t_0)}$. For any $t\ge t_0$, let $|X_{pq}(t)|=\minnorm{X(t)}$. We have that, for any $i$ and $j$,
\begin{align*}
 |X_{jp}& (t+1)|= \frac{1}{\lVert X_{j*}(t) \rVert_1}\sum_{k=1}^n |X_{jk}(t)||X_{pk}(t)|\\
        & \le \frac{|X_{jq}(t)|}{\lVert X_{j*}(t)\rVert_1}|X_{pq}(t)|+\Big( 1\!-\! \frac{|X_{jq}(t)|}{\lVert X_{j*}(t) \rVert_1} \Big)\maxnorm{X(t)}\\
        & \le \maxnorm{X(t)} - \frac{|X_{jq}(t)|}{\lVert X_{j*}(t) \rVert_1}\big( \maxnorm{X(t)}-\minnorm{X(t)} \big)\\
        & \le \maxnorm{X(t)}\! -\!\frac{\minnorm{X(t)}}{n\maxnorm{X(t)}}\big( \maxnorm{X(t)}\!-\!\minnorm{X(t)} \big),
\end{align*}
and, similarly,
\begin{align*}
 &|X_{ij}  (t+2)| = \frac{1}{\lVert X_{i*}(t\!+\! 1) \rVert_1}\sum_{k=1}^n |X_{ik}(t\!+\!1)||X_{jk}(t\!+\! 1)|\\
        & \le \frac{|X_{ip}(t + 1)|}{\lVert X_{i*}(t + 1) \rVert_1} |X_{jp}(t + 1)| \\
        &\quad + \Big( 1-\frac{|X_{ip}(t + 1)|}{\lVert X_{i*}(t + 1) \rVert_1} \Big)\maxnorm{X(t + 1)} \\
        &  \le \frac{|X_{ip}(t\! +\! 1)|}{\lVert X_{i*}(t\! +\! 1) \rVert_1} |X_{jp}(t\!+\!1)| 
  \!+\! \Big(\!1\!-\!\frac{|X_{ip}(t\! +\! 1)|}{\lVert X_{i*}(t\!+\!1) \rVert_1}\! \Big)\maxnorm{X(t)} \\
        & = \maxnorm{X(t)}  - \frac{|X_{ip}(t\! +\! 1)|}{\lVert X_{i*}(t\! +\! 1) \rVert_1}\big( \maxnorm{X(t)}-|X_{jp}(t\! +\! 1)| \big)\\
        &\le \maxnorm{X(t)} \\
        & \quad -\! \frac{\minnorm{X(t+1)}}{n\maxnorm{X(t+1)}}\frac{\minnorm{X(t)}}{n\maxnorm{X(t)}}\left( \maxnorm{X(t)}\!-\!\minnorm{X(t)} \right)\\
        & \le \maxnorm{X(t)} \!-\! \frac{\minnorm{X(t)}^2}{n^2\maxnorm{X(t)}^2}\big( \maxnorm{X(t)}\!-\!\minnorm{X(t)} \!\big).
\end{align*}
Therefore, 
\begin{align*}
& \maxnorm{X(t+2)}-\minnorm{X(t+2)} \\
&\qquad \le \Big( 1-\frac{\minnorm{X(t)}^2}{n^2 \maxnorm{X(t)}^2} \Big)\big( \maxnorm{X(t)} - \minnorm{X(t)} \big) \\
&\qquad \le \Big( 1-\frac{\minnorm{X(t_0)}^2}{n^2 \maxnorm{X(t_0)}^2} \Big)\big( \maxnorm{X(t)} - \minnorm{X(t)} \big).
\end{align*}
Now we have established the exponential convergence of $\maxnorm{X(t)}-\minnorm{X(t)}$ to $0$. Therefore, there exists $\alpha>0$ such that $\lim_{t\to \infty}|X_{ij}(t)|=\alpha$ for any $i,j$. Moreover, since $\sign(X(t))=\sign(X(t_0))$ for any $t\ge t_0$, we have $\lim_{t\to\infty}X(t)=\alpha bb^{\top}$, where $b=\sign(X_{1*}(t_0))^{\top}$. This concludes the proof for (ii)(b)$\,\Rightarrow\,$(ii)(c).

The proof for (ii)(b)$\,\Rightarrow\,$(ii)(a) is straightforward. If $G\big( X(t_0) \big)$ satisfies social balance, then, according to claim C.2), $\minnorm{X(t)}\ge \minnorm{X(t_0)}$ for any $t\ge t_0$, which means that $\liminf\limits_{t\to \infty}\min_{ij}|X_{ij}(t)|\ge \minnorm{X(t_0)}>0$.

Now we prove (ii)(c)$\,\Rightarrow\,$(ii)(b). Suppose $X(t)\to X^*$ as $t\to \infty$. For any $X^*\in \QHbM$ of rank one, there exists $\alpha>0$ and $b\in \{-1,1\}^n$ such that $X^* = \alpha bb^{\top}$. Since $\alpha>0$, there exists a neighbor set $\mathcal{U}(X^*)$ such that for any $X\in \mathcal{U}(X^*)$, $\sign{X}=\sign{X^*}$, which implies that, for any $X\in \mathcal{U}(X^*)$, $G(X)$ satisfies social balance. Moreover, since $X(t)\to X^*$, there exists $t_0>0$ such that $X(t)\in \mathcal{U}(X^*)$ for any $t\ge t_0$. Therefore, $G\big(X(t)\big)$ achieves social balance at $t_0$. This concludes the proof. \qed

As Theorem~\ref{theoremMain} points out, the appraisal matrix $X(t)$ converge to some rank-one matrix $\alpha bb^{\top}$ if and only if $X(t)$ achieves social balance (see Defintiion~\ref{defSB}) at some time $t_0$. The mathematical intuition behind the convergence to rank-one matrices is that, after achieving social balance, the quantity $\max_{ij}|X_{ij}(t)|-\min_{k\ell}|X_{k\ell}(t)|$ is monotonically vanishing. In reality, various factors such as noisy disturbances and individual prejudice (see~\cite{NEF-ECJ:90}) may prevent the appraisal matrix from converging to rank-one matrices.

Monte-Carlo validation of the non-vanishing appraisal condition indicates that statement~(ii)(b) of Theorem~\ref{theoremMain} holds for generic initial conditions. The detailed simulation results are presented in Section~\ref{section:discussion-simulation}. In fact, there exist some counter examples of $X(0)$ with which the non-vanishing condition on the solution $X(t)$ does not hold. \emph{Example~1}: if $X(0)$ is block-diagonal, then the dynamics of the blocks are decoupled. While statement~(ii) of Theorem~\ref{theoremMain} still holds block-wisely, the non-vanishing condition on the entire matrix $X(t)$ does not hold; \emph{Example~2}: if all the off-diagonal entries of $X(0)\in \mathbb{R}^{n\times n}$ are equal to some $-b<0$ and all the diagonal entries are equal to $a=(n-2)b/2$, one can check by computation that $X(1)$ becomes a diagonal matrix with strictly positive diagonals, i.e., $X(1)$ is a rank-$n$ fixed point and therefore the non-vanishing condition does not hold. However, for both Example~1 and~2, the sets of initial conditions are zero-measure and simulation results indicate that the zero-pattern of $X(t)$ with those specifically constructed $X(0)$ are not robust under perturbation: For Example~1, if $X(0)$ has two diagonal blocks, any perturbation of any of its zero-entries render the convergence of $X(t)$ to a rank-one matrix, and therefore the non-vanishing appraisal condition holds again; For Example~2, under any perturbation of any entry of $X(0)$, $X(t)$ converges to a rank-one matrix and the non-vanishing appraisal condition holds as well. Moreover, even for Example~1 and~2, the systems are still well-behaved and the solutions $X(t)$ achieve social balance with $k$ isolated subgraphs, as defined in Remark~\ref{remark:k-block-balance}. 

We end this section with some remarks on the homophily-based model.

\begin{remark}[Sufficient conditions for non-vanishing appraisals]
Since the non-vanishing appraisal condition is satisfied if $X(t)$ achieves social balance at finite time, by writing down the closed-form expressions of $X(1)$ and $X(2)$ and applying Lemma~\ref{numCond2}, we obtain the following sufficient conditions on the initial appraisals $X$ for non-vanishing appraisals: (i) either $(X_{i*}X_{1*}^{\top})(X_{1*}X_{j*}^{\top})(X_{i*}X_{j*}^{\top})>0$ for any $i,j$, (ii) or $(X_{i*}X^{\top}XX_{1*}^{\top})(X_{1*}X^{\top}XX_{j*}^{\top})(X_{i*}X^{\top}XX_{j*}^{\top})>0$ for any $i,j$. Here the condition~(i) ((ii) resp.) corresponds to the case when $X(1)$ ($X(2)$ resp.) is structurally balanced. For both condition~(i) and~(ii), the set of initial appraisal matrices $X$ have non-zero measure.
\end{remark}

\smallskip
\begin{remark}
Our homophily-based model exhibits the following somehow unrealistic behavior: for any $X(0)\in \Snz$, the solution $X(t)$ immediately becomes sign-symmetric at time step $1$. However, if we adopt a simple modification by considering individual memory, i.e., if the dynamics are given by
\begin{align}\label{eq:HbM-memory}
X(t+1) = \epsilon\, \fHbM(X(t)) + (1-\epsilon) X(t),
\end{align}
for some $\epsilon\in (0,1]$, then, following the same argument as in the proofs for Proposition~\ref{prop:HbM-finite-time-behav}, Theorem~\ref{thm:eq-set-QA}, and Theorem~\ref{theoremMain}, we conclude that
\begin{enumerate}
\item The set $\mathcal{S}_{\text{pos-diag}}=\{ X\in \mathbb{R}^{n\times n}\,|\, X_{ii}>0 \text{ for any }i\}$ is invariant under dynamics~\eqref{eq:HbM-memory};
\item Theorem~\ref{thm:eq-set-QA} still holds, while statements~(ii)-(iv) of Proposition~\ref{prop:HbM-finite-time-behav} and Theorem~\ref{theoremMain} still hold for any $X(0)\in \mathcal{S}_{\text{pos-diag}}$.
\end{enumerate} 
\end{remark}
The proof is provided in the technical report~\cite{WM-PCV-NEF-FB:17f-arxiv}.

\section{Influence-based Model}
\label{sectionModelB}
In this section, we propose the \emph{influence-based model} (IbM) and present some important theoretical results parallel to the results on the homophily-based model.
\smallskip

\begin{definition}[Influence-based model]
  Given an initial appraisal matrix
  $X(0)\in\Srsymm\subset\real^{n\times{n}}$, the influence-based model
  is defined by:
  \begin{equation} 
    \label{eq:IbM}
    X(t+1)=\diag(\lvert X(t)\rvert \vect{1}_{n})^{-1}X(t)X(t).
  \end{equation}
\end{definition}

\begin{remark}[Interpretation]
Compared with the homophily-based model~\eqref{eq:HbM}, the only difference here is that the term $X(t)X(t)^{\top}$ on the right-hand side of~\eqref{eq:HbM} is changed to $X(t)X(t)$. Equation~\eqref{eq:IbM} now describes an interpersonal influence process: Individuals adjust their appraisals of each other via the opinion dynamics $X(t+1)=W(t)X(t)$. Here the opinion of each individual is how she/he appraise every one in the group, and each $W_{ij}(t)$ denotes the weight that individual $i$ assigns to individual $j$'s opinions. The construction of the influence matrix $W(t)=\diag\big( |X(t)|\vect{1}_n \big)^{-1}X(t)$ implies that the interpersonal influences are proportional to the interpersonal appraisals. 
\end{remark}

Next, we present some results on the invariant set and finite-time behavior of the influence-based model.

\begin{proposition}[Finite-time Properties of the IbM]
  \label{prop:IbM-finite-time-behav}
Consider the dynamical system~\eqref{eq:IbM} and define $\fIbM (X)= \diag(\lvert X\rvert \vect{1}_{n})^{-1}XX$.
Pick any $X_0\in \Srsymm$. The following statements hold: 
\begin{enumerate}
\item \label{fact:IbM-inv-set} the map $\fIbM$ is well-defined for any $X\in \Snz$ and maps $\Srsymm$ to $\Srsymm$;
\item \label{fact:IbM-well-defined} the solution $X(t)$, $t\in \mathbb{Z}_{\ge 0}$, to equation~\eqref{eq:IbM} from initial condition $X(0)=X_0$ exists and is unique;
\item \label{fact:IbM-upper-bound} the max norm of $X(t)$ satisfies 
\begin{equation*}
\maxnorm{X(t+1)} \le \maxnorm{X(t)} \le \maxnorm{X(0)};
\end{equation*}
\item \label{fact:IbM-inv-scaling}for any $c > 0$, the trajectory $cX(t)$ is the solution to equation~\eqref{eq:IbM} from initial condition $X(0)=cX_0$.
\end{enumerate}
\end{proposition}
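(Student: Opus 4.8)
The plan is to mirror the proof of Proposition~\ref{prop:HbM-finite-time-behav}, the crucial new ingredient being the preservation of the reversibility condition that defines $\Srsymm$. For statement~\ref{fact:IbM-inv-set}, well-definedness is immediate: any $X\in\Snz$ has $\lVert X_{i*}\rVert_1>0$ for every $i$, so $\diag(|X|\vect{1}_n)^{-1}$ exists and $X^+:=\fIbM(X)$ is single-valued. The substance is to show that $X^+\in\Srsymm$ whenever $X\in\Srsymm$.

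The key computation runs as follows. Let $X\in\Srsymm$, pick $\gamma\succ\vect{0}_n$ with $\gamma_iX_{ij}=\gamma_jX_{ji}$ for all $i,j$, and write $r_i=\lVert X_{i*}\rVert_1$. Starting from $r_iX^+_{ij}=\sum_k X_{ik}X_{kj}$ and substituting the reversibility identity $\gamma_iX_{ik}=\gamma_kX_{ki}$ termwise, I obtain
\[
\gamma_i r_i X^+_{ij}=\sum_k \gamma_k X_{ki}X_{kj}.
\]
The right-hand side is manifestly symmetric under $i\leftrightarrow j$, so setting $\gamma^+:=\diag(r)\gamma$ (equivalently $\gamma^+_i=\gamma_i r_i$) gives $\gamma^+_iX^+_{ij}=\gamma^+_jX^+_{ji}$. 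Since $\gamma_i>0$ and $r_i>0$, we have $\gamma^+\succ\vect{0}_n$, which establishes the reversibility of $X^+$. Sign-symmetry of $X^+$ then follows for free from $\gamma^+_iX^+_{ij}=\gamma^+_jX^+_{ji}$ with $\gamma^+\succ\vect{0}_n$; and the positive-diagonal property holds because $X^+_{ii}=\tfrac{1}{r_i}\sum_k X_{ik}X_{ki}$, where each product $X_{ik}X_{ki}\ge0$ by the sign-symmetry of $X$ and the $k=i$ term equals $X_{ii}^2>0$. Hence $X^+\in\Srsymm$, proving statement~\ref{fact:IbM-inv-set}.

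The remaining statements are routine and parallel the homophily case. Statement~\ref{fact:IbM-well-defined} follows from~\ref{fact:IbM-inv-set}: since $\Srsymm\subset\Snz$, each $X(t)\in\Srsymm$ determines a unique $X(t+1)=\fIbM(X(t))\in\Srsymm$, so the trajectory exists and is unique. For statement~\ref{fact:IbM-upper-bound} I bound
\[
|X^+_{ij}|\le\frac{1}{r_i}\sum_k|X_{ik}||X_{kj}|\le\frac{1}{r_i}\Big(\max_k|X_{kj}|\Big)\sum_k|X_{ik}|=\max_k|X_{kj}|\le\maxnorm{X},
\]
which yields $\maxnorm{X(t+1)}\le\maxnorm{X(t)}$ and, by iteration, $\le\maxnorm{X(0)}$. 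Statement~\ref{fact:IbM-inv-scaling} is the scaling identity $\fIbM(cX)=\diag(c|X|\vect{1}_n)^{-1}(cX)(cX)=c\,\fIbM(X)$ for $c>0$, so $cX(t)$ solves~\eqref{eq:IbM} from $cX_0$.

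I expect the main obstacle to be statement~\ref{fact:IbM-inv-set}: unlike the homophily model, where $XX^\top$ automatically produces a sign-symmetric image, here $XX$ need not be symmetric, and the only reason the class $\Srsymm$ is preserved is the reversibility substitution above, which converts the row-normalized product into the symmetric quadratic form $\sum_k\gamma_kX_{ki}X_{kj}$. Identifying the correct updated weight vector $\gamma^+=\diag(r)\gamma$ is the heart of the argument; everything else transfers essentially verbatim from Proposition~\ref{prop:HbM-finite-time-behav}.
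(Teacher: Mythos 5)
Your proposal is correct and follows essentially the same route as the paper's proof: well-definedness from nonzero rows, the reversibility substitution $\gamma_i X_{ik}=\gamma_k X_{ki}$ to show $\Srsymm$ is invariant with the updated weight vector $\gamma^+=\diag(|X|\vect{1}_n)\gamma$ (the paper's $\tilde\gamma$), positive diagonals from sign-symmetry of $X$, and the same row-sum bound and scaling identity for statements (iii) and (iv). Your presentation of the invariance via the manifestly symmetric form $\sum_k \gamma_k X_{ki}X_{kj}$ is a minor stylistic variant of the paper's computation $X^+_{ij}=\tfrac{\gamma_j\lVert X_{j*}\rVert_1}{\gamma_i\lVert X_{i*}\rVert_1}X^+_{ji}$, not a different argument.
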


\begin{proof}
Denote $X^+=\fIbM(X)$ for simplicity. Following the same argument as in the proof of Proposition~\ref{prop:HbM-finite-time-behav}, we know that $\fIbM$ is well-defined for any $X\in \Snz$. For any $X\in \Srsymm$, there exists $\gamma \succ \vect{0}_n$ such that $\diag(\gamma)X=X^{\top}\diag(\gamma)$. Therefore,
\begin{align*}
X_{ii}^+ & \!=\! \frac{1}{\lVert X_{i*} \rVert_1}\! \sum_k X_{ik}X_{ki}\!=\!\frac{1}{\lVert X_{i*} \rVert_1}\! \sum_{k}\! \frac{\gamma_i}{\gamma_k}X_{ik}^2\!>\!0,\text{ and }\\
X_{ij}^+ & \!=\! \frac{1}{\lVert X_{i*} \rVert_1} \frac{\gamma_j}{\gamma_i}\!\sum_k\! X_{jk}X_{ki}\!=\!\frac{\lVert X_{j*} \rVert_1 \gamma_j}{\lVert X_{i*} \rVert_1 \gamma_i}X_{ji}^+.
\end{align*} 
Let $\tilde{\gamma}=\diag\big( |X|\vect{1}_n \big)\gamma$, then we have $\diag(\tilde{\gamma})X^+={X^{+}}^{\top}\diag(\tilde{\gamma})$. Therefore, $X^+=\fIbM(X)\in \Srsymm$. This concludes the proof of statement~\ref{fact:IbM-inv-set}. Statements~\ref{fact:IbM-well-defined} is a direct consequence of statement~\ref{fact:IbM-inv-set}. Moreover,
\begin{align*}
\big| X_{ij}^+ \big| & = \frac{1}{\lVert X_{i*} \rVert_1} \Big| \sum_k X_{ik}X_{kj} \Big| \le \frac{1}{\lVert X_{i*} \rVert_1} \sum_k |X_{ik}||X_{kj}|\\
                     & \le \max_k |X_{kj}| \le \maxnorm{X} 
\end{align*} 
immediately lead to statement~\ref{fact:IbM-upper-bound}. Statement~\ref{fact:IbM-inv-scaling} is a straightforward observation obtained from equation~\eqref{eq:IbM}.
\end{proof}

Notice that $\Snz$ is not an invariant set of the map $\fIbM$. For example,
\begin{equation*}
X(0) =  
\begin{bmatrix}
1 & 2\\
-0.5 & -1
\end{bmatrix}\in\Snz
\end{equation*} 
leads to $X(1)\notin\Snz$ and, moreover, $\fIbM(X(1))$ is not defined. 
For the influence-based model, we consider $\Srsymm$ as the domain of system~\eqref{eq:IbM} due to its invariance under the map $\fIbM$. According to Proposition~\ref{prop:IbM-finite-time-behav}, for any $X(0) \in \Srsymm$, each entry of $|X(t)|$ is uniformly upper bounded, which is a desired property the previous models in~\cite{SAM-JK-RDK-SHS:11,VAT-PVD-PDL:13} do not have.

The following theorem characterizes the set of fixed points of the map $\fIbM$ in $\Srsymm$.

\begin{theorem}[Fixed points and social balance]
\label{thm:eq-set-QB}
Consider system~\eqref{eq:IbM} in domain $\Srsymm$. Define
\begin{equation*}
\begin{split}
      &\QIbM \\
      &\text{  }= \Big{\{} PYP^\top  \in \Srsymm \,\Big|\, P \emph{ is a permutation matrix,}\\
      &\qquad \text{ }Y\emph{ is a block diagonal matrix with blocks of the}\\
      &\qquad \text{ }\emph{form }\sign(w)w^\top,\; w \in \real^{m}\text{ and }|w|\succ \vect{0}_m, m \leq n\Big{\}}.
\end{split}
\end{equation*}
Then the following statements hold:
\begin{enumerate}
	\item \label{prop:eqSetQB-AllFixedPoints} $\QIbM$ is the set of all the fixed points of system~\eqref{eq:IbM} in domain $\Srsymm$,
	\item \label{prop:eqSetQB-StructuralBalance} for any $X\in\QIbM$, $G(X)$ is composed by isolated complete subgraphs that satisfy social balance.
\end{enumerate}
\end{theorem}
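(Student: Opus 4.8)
The plan is to follow the two-part template of Theorem~\ref{thm:eq-set-QA}: first establish the easy inclusion $\QIbM\subseteq\{\text{fixed points}\}$ by direct computation, then prove the reverse inclusion, and finally read off statement~\ref{prop:eqSetQB-StructuralBalance} from the resulting normal form. For the easy inclusion I would verify that a single block $Y=\sign(w)w^\top$ is a fixed point: since every entry of $w$ is nonzero, $\|Y_{i*}\|_1=\|w\|_1$ for every $i$, while $(YY)_{ij}=\sign(w_i)w_j\sum_k w_k\sign(w_k)=\|w\|_1\,Y_{ij}$, so $\fIbM(Y)=Y$. Because $\fIbM$ acts blockwise on block-diagonal matrices and satisfies $\fIbM(PYP^\top)=P\fIbM(Y)P^\top$ for permutations $P$ (the same one-line checks as in Theorem~\ref{thm:eq-set-QA}), every element of $\QIbM$ is a fixed point; one also checks $\sign(w)w^\top\in\Srsymm$ via the reversibility vector $\gamma=|w|$, confirming these points lie in the domain.

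For the reverse inclusion I would \emph{not} reuse the max-norm induction of Theorem~\ref{thm:eq-set-QA}, which is awkward here because the product $XX$ couples entries along a path $k\to p\to\ell$ rather than comparing two rows; instead I would argue spectrally. Let $X\in\Srsymm$ be a fixed point and set $W=\diag(|X|\vect{1}_n)^{-1}X$, so that $|W|$ is row-stochastic and the fixed-point identity reads $X=WX$; hence every column of $X$ is a right eigenvector of $W$ for eigenvalue $1$, and $X\neq0$ since $X_{ii}>0$. Sign-symmetry makes the support of $X$ an undirected graph, so after a permutation $X$ is block diagonal with each diagonal block irreducible, and since $\fIbM$ decouples over these blocks it suffices to treat a single irreducible block, i.e.\ to assume $W$ irreducible.

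The key step will be to show $\dim\ker(I-W)=1$. For any $v$ with $Wv=v$ one has the entrywise domination $|v|\le|W|\,|v|$; pairing with the positive left Perron vector of the stochastic matrix $|W|$ forces $|v|=|W|\,|v|$, so by Perron--Frobenius $|v|$ is a positive multiple of $\vect{1}$, and tightness of the triangle inequality in $|v_i|=|\sum_j W_{ij}v_j|$ forces $v=c\,s$ with $s\in\{-1,+1\}^n$ and $\sign(X_{ij})=s_is_j$ on the support. Thus $\ker(I-W)$ is spanned by $s$, every column of $X$ is a multiple of $s$, and $X=s\,w^\top$ for some $w\in\real^n$. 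The positivity $X_{jj}=s_jw_j>0$ then yields $w_j\neq0$ for all $j$ (so the block is complete) and $s_j=\sign(w_j)$, whence $X=\sign(w)w^\top$, exactly the block form defining $\QIbM$; reassembling blocks and the permutation proves statement~\ref{prop:eqSetQB-AllFixedPoints}.

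I expect this eigenspace characterization to be the main obstacle: the delicate point is converting the domination $|v|\le|W|\,|v|$ into an equality, which is where the positive left Perron eigenvector of $|W|$ and irreducibility are essential, and then extracting the balanced sign vector $s$ from the equality case of the triangle inequality. Given the normal form, statement~\ref{prop:eqSetQB-StructuralBalance} is immediate: each block $\sign(w)w^\top$ satisfies $\sign(X_{ij})=\sign(w_i)\sign(w_j)$ with positive diagonal, so $\sign(X_{i*})=\pm\sign(X_{j*})$ and Lemma~\ref{numCond2} gives social balance on each complete component. One could alternatively mimic the max-norm case analysis of Theorem~\ref{thm:eq-set-QA} verbatim, replacing the condition ``$|X_{k\ell}|=\maxnorm{X}$'' by its path-product analogue; the equality analysis there would play the role that the spectral argument plays here and would be the corresponding hard step.
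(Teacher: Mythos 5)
Your proposal is correct, and while your treatment of the easy inclusion ($\QIbM\subseteq$ fixed points) and of statement~(ii) matches the paper's, your proof of the reverse inclusion takes a genuinely different route. The paper proceeds by induction on the dimension $n$, mirroring the proof of Theorem~\ref{thm:eq-set-QA}: for a fixed point $X$ it asks, for each column $j$, where the column maximum $\max_k|X_{kj}|$ is attained, and in each of the two resulting cases exploits the equality conditions in a triangle inequality to peel off either a $1\times 1$ block or a complete block of the form $\sign(w)w^\top$, then invokes the induction hypothesis on the complementary block. You instead decompose $X$ into irreducible blocks along the connected components of its sign-symmetric support, rewrite the fixed-point identity as the linear equation $X=WX$ with $|W|$ row-stochastic, and apply Perron--Frobenius: pairing the domination $|v|\le|W|\,|v|$ with the positive left Perron vector of $|W|$ forces $|v|=|W|\,|v|$, hence $|v|\propto\vect{1}_n$, and the equality case of the triangle inequality extracts the balanced sign vector $s$, so $\ker(I-W)$ is one-dimensional and each block is $sw^\top=\sign(w)w^\top$. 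Your argument is shorter, avoids induction, makes the rank-one structure of each block transparent, and---since it uses only sign symmetry and positive diagonals, never the reversibility vector $\gamma$---it directly yields the stronger statement that $\QIbM$ exhausts the fixed points in all of $\Sssymm$, a fact the paper only records afterwards in a Remark as implicit in its proof. What the paper's elementary case analysis buys in exchange is uniformity and self-containedness: it needs no spectral machinery and runs in exact parallel with Theorem~\ref{thm:eq-set-QA}, for which your eigenvector argument has no direct analogue (there the fixed-point equation involves $XX^\top$, so the columns of $X$ are not eigenvectors of the influence matrix $W$).
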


\begin{proof}
We first prove that any $X^*\in\QIbM$ is a fixed point of system~\eqref{eq:IbM}. For any $w \in \real^{n}$ such that $|w|\succ \vect{0}_n$, the matrix $Y=\sign(w)w^{\top}$ satisfies
\begin{align*}
& \fIbM(Y) \\
       & \quad = \diag(|\sign(w)w^{\top}|\vect{1}_{n})^{-1}(\sign(w)w^{\top})(\sign(w)w^{\top})\\
       & \quad = \sign(w)w^{\top} = Y.
\end{align*} 
Therefore, $Y=\sign(w)w^{\top}$ is a fixed point of system~\eqref{eq:IbM}. A simple observation is that, if $Y$ is a block diagonal matrix and each block takes the form $\sign(w)w^{\top}$, then $Y$ is a fixed point of $\fIbM$.
Moreover, given any fixed point $Y$, for any permutation matrix $P\in \real^{n\times n}$, since
\begin{align*}
PYP^{\top} & = P\diag(|Y|\vect{1}_n)^{-1}YYP^{\top}\\
           & = \diag(|PYP^{\top}|\vect{1}_n)^{-1}(PYP^{\top})(PYP^{\top})\\
           & = \fIbM(PYP^{\top}),
\end{align*}
any $X^*\in \QIbM$ is a fixed point of $\fIbM$.

Since, for any $Y=sign(w)w^{\top}$ with $|w|\succ 0$, we have  $Y_{jk}Y_{k\ell}Y_{\ell j}=|w_j w_k w_\ell|>0$, following the same line of argument in the proof for Theorem 3.4(ii), we conclude that any $X^*\in \QIbM$ is associated with a graph $G(X^{*})$ composed by isolated complete subgraphs that satisfy social balance. This proves statement~\ref{prop:eqSetQB-StructuralBalance}. 

Now we prove by induction that $\QIbM$ is actually the set of all the fixed point of system~\eqref{eq:IbM} in $\Srsymm$. We adopt the notations $\theta_i$ and $X_{j*,\theta_i}$ in the same way as defined in the proof of Theorem~\ref{thm:eq-set-QA}, and, in addition, define $X_{*j,\theta_i}$ as the $j$-th column of $X$ with all the $k$-th entry such that $k\notin \theta_i$ removed. One can check that the trivial case of $n=1$ is true. Suppose statement~\ref{prop:eqSetQB-AllFixedPoints} holds for any system with dimension $\tilde{n}<n$.

For system~\eqref{eq:IbM} with dimension $n$, suppose $X\in \Srsymm$ is a fixed point of the system~\eqref{eq:IbM}, i.e., $X = \fIbM(X)$. For any given $j$,
\begin{align*}
|X_{ij}| & = \frac{1}{\lVert X_{i*} \rVert_1}\Big| \sum_k X_{ik}X_{kj} \Big| \le \frac{1}{\lVert X_{i*} \rVert_1} \sum_k |X_{ik}||X_{kj}| \\
         & \le \max_k |X_{kj}|, \quad \text{for any }i,
\end{align*}
and there exists some $i$ such that $|X_{ij}|=\max_k |X_{kj}|$. Now we discuss two cases that cover all the possible $X$'s.

Case 1: $|X_{jj}|=\max_k |X_{kj}|$ and $|X_{ij}|<\max_k |X_{kj}|$ for any $i\neq j$. Since $X\in \Srsymm$, $X$ is sign-symmetric, 
\begin{equation*}
|X_{jj}| = \frac{1}{\lVert X_{j*} \rVert_1} \sum_k |X_{jk}| |X_{kj}|=\max_k |X_{kj}|.
\end{equation*}
Due to the second equality in the equations above, $|X_{kj}|=\max_{\ell} |X_{\ell j}|$ for any $k\in \theta_j$. Therefore, in Case 1, $i\notin \theta_j$ for any $i\neq j$, which in turn implies that $X_{ji}=X_{ij}=0$ for any $i\neq j$. As the consequence, there exists a permutation matrix $P$ such that
\begin{equation*}
PXP^{\top} = 
\begin{bmatrix}
X_{jj} & \vect{0}_{n-1}^{\top} \\
\vect{0}_{n-1} & \tilde{X}
\end{bmatrix},
\end{equation*}
where $\tilde{X}$ is an $(n-1)\times (n-1)$ matrix. Following the same line of argument in the Case 1 of the proof of Theorem~\ref{thm:eq-set-QA}, we conclude that $X\in \QIbM$.

Case 2: there exists $i\neq j$ such that $|X_{ij}|=\max_k |X_{kj}|$. For such $i$, we have $j\in \theta_i$. In addition, the equality below
\begin{equation*}
|X_{ij}| = \frac{1}{\lVert X_{i*} \rVert_1}\Big| \sum_k X_{ik}X_{kj} \Big|=\max_k |X_{kj}| 
\end{equation*}  
leads to the following two results:
\begin{enumerate}[label={R.\arabic*)}]
\item $\sign( X_{i*,\theta_i}) = \pm \sign(X_{*j,\theta_i}^{\top}) $;
\item $|X_{kj}| = \max_{\ell} |X_{\ell j}|$ for any $k\in \theta_i$.
\end{enumerate}
Result R.2) and $j\in \theta_i$ lead to $|X_{jj}|=\max_{\ell} |X_{\ell j}|$. Therefore, for any $k\in \theta_j$, $|X_{kj}|=\max_{\ell} |X_{\ell j}|$. Moreover, since $X$ is sign-symmetric, for any $k\notin \theta_j$, $X_{jk}=X_{kj}=0$. 

For any $i\in \theta_j$, since 
\begin{align*}
|X_{ij}| = \frac{1}{\lVert X_{i*} \rVert_1} \Big| \sum_k X_{ik}X_{kj} \Big| = \max_{\ell} |X_{\ell j}|,
\end{align*}
we have $|X_{kj}|=\max_{\ell} |X_{\ell j}|$ for any $k\in \theta_i$. Since $\max_{\ell}|X_{\ell j}|>0$ for any $X\in \Srsymm$ and $X_{kj}=0$ for any $k\in \theta_j$, we have $\theta_i \subset \theta_j$.

For any given $i\in \theta_j$, since $X\in \Srsymm$, we know that $X_{ii}>0$ and $X_{ji}>0$. Apply the same argument for the $j$-th column in Case 2 to the $i$-th column, we conclude that $X_{ii} = \max_{\ell} |X_{\ell i}|$ and $|X_{ji}|=\max_{\ell} |X_{\ell i}|$, the latter of which in turn implies that $|X_{ki}|=\max_{\ell} |X_{\ell i}|$ for any $k\in \theta_j$. Moreover, since $X_{ii}=\max_{\ell} |X_{\ell i}|$ leads to $|X_{ki}|=\max_{\ell} |X_{\ell i}|$ for any $k\in \theta_i$ and $X_{ik}=X_{ki}=0$ for any $k\notin \theta_i$, we have $\theta_j \subset \theta_i$. Since we already get $\theta_i\subset \theta_j$, we conclude that $\theta_j = \theta_i$ for any $i\in \theta_j$. Now we have proved that graph $G(X)$ can be partitioned into two isolated subgraphs with the node sets $\theta_j$ and $\{1,\dots,n\}\setminus \theta_j$ respectively. In addition, due to Result R.1) and the facts that $\theta_i=\theta_j$ and $X$ is sign-symmetric, we obtain that $\sign(X_{i*},\theta_j)=\sign(X_{*i,\theta_j}^{\top})=\pm \sign(X_{*j,\theta_j}^{\top})$ for all $i\in \theta_j$.

Taking together all the results we have obtained for Case 2, we conclude that, for any given $j$ in Case 2: (1) $|X_{kj}|=\max_{\ell}|X_{\ell j}|$ for any $k\in \theta_j$ and $X_{kj}=X_{jk}=0$ for any $k\notin \theta_j$; (2) For any $i\in \theta_j$, $\theta_i=\theta_j$. In addition, $|X_{ki}|=\max_{\ell} |X_{\ell i}|$ for any $k\in \theta_j$ and $X_{ki}=X_{ik}=0$ for any $k\notin \theta_j$; (3) For any $i\in \theta_j$, $\sign(X_{*i})=\sign(X_{*j})$. Denote by $|\theta_j|$ the cardinality of $\theta_j$ and define the $|\theta_j|\times |\theta_j|$ matrix $X^{(\theta_j)} = \sign( w^{(\theta_j)} ){w^{(\theta_j)}}^{\top}$, where $w^{(\theta_j)}=X_{j*,\theta_j}^{\top}$. If $|\theta_j|=n$, then $X$ is already of the form $\sign(w)w^{\top}$ and thus we have $X\in \QIbM$. If $|\theta_j|\neq n$, there exists a permutation matrix $P$ such that 
\begin{equation*}
PXP^{\top} = 
\begin{bmatrix}
X^{(\theta_j)} & \vect{0}_{|\theta_j|\times (n-|\theta_j|)} \\
\vect{0}_{(n-|\theta_j|) \times |\theta_j|} & \tilde{X}
\end{bmatrix}.
\end{equation*}
Following the line of argument in Case 1, we have $X\in \QIbM$. This concludes the proof for statement~\ref{prop:eqSetQB-AllFixedPoints}.
\end{proof}
\smallskip

\begin{remark}
  The proof of Theorem~\ref{thm:eq-set-QB} implies that
    $\QIbM$ is actually the set of all the fixed points of the map $\fIbM$
    in $\Sssymm$. However, the set $\QIbM$ does not contain all the fixed
    points in $\Snz$. For example, let $X=\alpha bb^{\top}$ for some
    $\alpha>0$ and $b\in\{-1,+1\}^{n}$. Then, pick one $i\in\until{n}$ and
    set $X_{*i}= \vect{0}_n$. It can be easily verified that $X=\fIbM(X)$
    but $X\notin\QIbM$.
\end{remark}

Now we present the main results on the convergence of the appraisal network to social balance.
\smallskip

\begin{theorem}[Convergence and social balance in the IbM]
\label{theoremMain1}
Consider the influence-based model given by equation~\eqref{eq:IbM}. The following statements hold:
\begin{enumerate}
\item Each element in $\QIbM$ of rank one is a locally stable fixed point of $\fIbM$;
\item For any $X(0)\in \Srsymm$, the following three statements are equivalent:
   \begin{enumerate}
   \item the solution $X(t)$ satisfies the non-vanishing appraisal condition given by Definition~\ref{def:lower-bounded-appraisal-condition};
   \item there exists $t_0\ge 0$ such that $G(X(t))$ satisfies social balance for all $t\ge t_0$;
   \item there exists $X^*\in \QIbM$ of rank one such that $\lim_{t\to \infty} X(t) = X^*$.
   \end{enumerate}
\end{enumerate}
\end{theorem}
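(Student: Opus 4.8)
The plan is to mirror the proof of Theorem~\ref{theoremMain}, establishing the two analogous auxiliary claims and then closing the equivalence through the chain non-vanishing $\Rightarrow$ finite-time balance $\Rightarrow$ rank-one convergence $\Rightarrow$ finite-time balance, together with finite-time balance $\Rightarrow$ non-vanishing. The structural fact that organizes everything is that a balanced $X$ is sign-symmetric and satisfies $\sign(X_{ik})\sign(X_{kj})=\sign(X_{ij})$ for every $k$ (combine (S2) with sign-symmetry). Substituting into the IbM update $X_{ij}^{+}=\tfrac{1}{\norm{X_{i*}}_1}\sum_{k}X_{ik}X_{kj}$, every summand then carries the common sign $\sign(X_{ij})$, so $\sign(X^{+})=\sign(X)$ and $\lvert X_{ij}^{+}\rvert=\tfrac{1}{\norm{X_{i*}}_1}\sum_{k}\lvert X_{ik}\rvert\lvert X_{kj}\rvert$. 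First I would record: \textbf{(C.1$'$)} if $G(X(t_0))$ is balanced with nonzero entries, then $G(X(t))$ is balanced with $\sign(X(t))=\sign(X(t_0))$ for all $t\ge t_0$; and \textbf{(C.2$'$)} writing $W(t)=\diag(\lvert X(t)\rvert\vect{1}_n)^{-1}\lvert X(t)\rvert$, which is row-stochastic, the magnitude update reads $\lvert X_{ij}^{+}\rvert=\sum_k W_{ik}\lvert X_{kj}\rvert$, a convex combination of the $j$-th \emph{column} of $\lvert X\rvert$; hence each column maximum $\max_i\lvert X_{ij}(t)\rvert$ is non-increasing, each column minimum is non-decreasing, and in particular $\maxnorm{X(t)}$ is non-increasing while $\minnorm{X(t)}$ is non-decreasing.

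Statement~(i) then follows as in the homophily case: a rank-one fixed point is $X^{*}=\sign(w)w^{\top}$ with $\lvert w\rvert\succ\vect{0}_n$, and for $\maxnorm{\Delta}=\zeta<\min_j\lvert w_j\rvert$ the perturbation $X(0)=X^{*}+\Delta$ is balanced with $\sign(X(0))=\sign(X^{*})$; by (C.1$'$)--(C.2$'$) each column of $\lvert X(t)\rvert$ is trapped in $[\lvert w_j\rvert-\zeta,\lvert w_j\rvert+\zeta]$, giving $\maxnorm{X(t)-X^{*}}\le\zeta$. For non-vanishing $\Rightarrow$ finite-time balance I would invoke the extended LaSalle principle with $V(X)=\maxnorm{X}$: by Proposition~\ref{prop:IbM-finite-time-behav} the map is continuous and $V$ is non-increasing on $\Srsymm$, and the non-vanishing hypothesis confines the trajectory to a compact set $G_c\subset\Srsymm$ whose entries are uniformly bounded away from $0$ (compactness uses this lower bound to keep the reweighting vector $\gamma$ of $\Srsymm$ well-defined in the limit), so $X(t)$ approaches the largest invariant set $M\subseteq\setdef{X\in G_c}{V(X^+)=V(X)}$. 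Here rows and columns exchange roles relative to Theorem~\ref{theoremMain}: the equality $\lvert X_{ij}^{+}\rvert=\maxnorm{X}$ forces the whole \emph{column} $X_{*j}$ to have magnitude $\maxnorm{X}$ and forces $\sign(X_{ik})\sign(X_{kj})=\sign(X_{ij})$ for all $k$. Applying this to the all-maximal column of $X^{+}$ guaranteed by invariance of $M$, and using the sign-symmetry of $\Srsymm$ to rewrite the column condition as $\sign(X_{p*})=\pm\sign(X_{\tilde j*})$ for every $p$, Lemma~\ref{numCond2} shows every $X\in M$ is balanced; the Heine--Borel neighborhood argument then produces a balanced neighborhood of $M$, so $X(t)$ is balanced for all large $t$.

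The step I expect to be hardest is finite-time balance $\Rightarrow$ rank-one convergence. For the homophily model the limit $\alpha bb^{\top}$ has all entries of equal magnitude and the proof used the scalar contraction $\maxnorm{X}-\minnorm{X}\to0$; this fails here because the IbM rank-one equilibria $\sign(w)w^{\top}$ have \emph{column-dependent} magnitudes $\lvert w_j\rvert$, so no single scalar spread can vanish. Instead I would use (C.2$'$): once balance holds the magnitude matrix evolves by the genuinely linear averaging $\lvert X(t+1)\rvert=W(t)\lvert X(t)\rvert$, and each column is averaged \emph{independently} by the same $W(t)$. Under finite-time balance, (C.2$'$) gives $\minnorm{X(t)}\ge\minnorm{X(t_0)}>0$, so every entry of $W(t)$ is bounded below by $\eta=\minnorm{X(t_0)}/(n\maxnorm{X(0)})>0$; standard ergodicity estimates for backward products of stochastic matrices with a common positive lower bound contract each column spread $\max_i\lvert X_{ij}(t)\rvert-\min_i\lvert X_{ij}(t)\rvert$ geometrically to $0$. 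Hence $\lvert X_{ij}(t)\rvert\to c_j\ge\minnorm{X(t_0)}>0$, and since $\sign(X(t))\equiv\sign(X(t_0))$ is a balanced, hence rank-one, sign pattern $ss^{\top}$, setting $w_j=s_j c_j$ yields $X(t)\to\sign(w)w^{\top}\in\QIbM$.

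The remaining implications are routine. Finite-time balance $\Rightarrow$ non-vanishing is immediate from (C.2$'$), since $\liminf_{t}\minnorm{X(t)}\ge\minnorm{X(t_0)}>0$. For rank-one convergence $\Rightarrow$ finite-time balance, the limit $X^{*}=\sign(w)w^{\top}$ with $\lvert w\rvert\succ\vect{0}_n$ has a balanced sign pattern and a neighborhood on which the sign is constant, and $X(t)\to X^{*}$ eventually enters it, so $G(X(t))$ is balanced. The only genuinely new ingredient beyond the homophily analysis is the column-wise consensus estimate above, which replaces the scalar contraction and accommodates the non-constant magnitudes of the IbM equilibria.
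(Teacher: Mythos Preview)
Your proposal is correct and follows essentially the same route as the paper: the two auxiliary claims (sign invariance and column-wise monotonicity after balance), local stability via column trapping, LaSalle to get (a)$\Rightarrow$(b), column-wise contraction for (b)$\Rightarrow$(c), and the easy remaining implications. The only noteworthy difference is that the paper takes as Lyapunov data the family $V_k(X)=\lVert X_{*k}\rVert_\infty$ (one per column) rather than the single $V(X)=\maxnorm{X}$ you use, which lets it characterize $M$ directly as the rank-one equilibria; your choice also works since, as you note, it is enough that every $X\in M$ be balanced for the Heine--Borel neighborhood step.
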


\begin{proof}
We start by proving the following two claims. For any given $t_0\ge 0$, if all the entries of $X(t_0)$ are non-zero and $G(X(t_0))$ satisfies social balance, then,
\begin{enumerate}[label={C.\arabic*)}]
\item\label{claim-modelB-main-theorem-1} for any $t\ge t_0$, $G(X(t))$ satisfies social balance and $\sign(X(t))=\sign(X(t_0))$;
\item\label{claim-modelB-main-theorem-2} for any $t\ge t_0$, $\maxnorm{X(t)}$ is non-increasing and $\minnorm{X(t)}$ is non-decreasing.
\end{enumerate}
Claim~\ref{claim-modelB-main-theorem-1} is proved in the same way as in the proof of Theorem~\ref{theoremMain}. Suppose $G(X(t_0))$ achieves social balance. For any $i$ and $j$, since
\begin{align*}
X_{ij}(t_0+1) = \frac{1}{\lVert X_{i*}(t_0) \rVert_1}\sum_k X_{ik}(t_0)X_{kj}(t_0),
\end{align*}
we have 
\begin{align*}
\sign(X_{ij}(t_0+1)) = \sign\left( \sum_k X_{ik}(t_0)X_{kj}(t_0) \right).
\end{align*}
Since $G(X(t_0))$ achieves social balance, for any $k$,
\begin{align*}
\sign(X_{ik}(t_0)X_{kj}(t_0)) = \sign(X_{ji}(t_0)) = \sign(X_{ij}(t_0)).
\end{align*} 
Therefore, $\sign(X(t_0+1))=\sign(X(t_0))$. This concludes the proof of claim~\ref{claim-modelB-main-theorem-1}.

For any $t\ge t_0$, since $G(X(t))$ satisfies social balance,
\begin{equation*}
|X_{ij}(t+1)| = \frac{1}{\norm{X_{i*}(t)}_{1}}\sum_{\ell=1}^n |X_{i\ell}(t)||X_{\ell j}(t)|,
\end{equation*}
for any $i$ and $j$. Therefore, for any given $j$, the previous expression leads to the following two inequalities:\\
$\min\limits_{\ell}|X_{\ell j}(t\!+\!1)|\! \ge\! \min\limits_{\ell} |X_{\ell j}(t)|$; $\max\limits_{\ell}|X_{\ell j}(t\!+\!1)|\! \le\! \max\limits_{\ell} |X_{\ell j}(t)|$. Let $\maxnorm{X(t)}=\max_{i,j}|X_{ij}(t)|$ and $\minnorm{X(t)}=\min_{i,j}|X_{ij}(t)|$. We have $\maxnorm{X(t+1)}\le \maxnorm{X(t)}$ and $\minnorm{X(t+1)}\ge \minnorm{X(t)}$ for any $t\ge t_0$. This concludes the proof of Claim~\ref{claim-modelB-main-theorem-2}.

Now we prove statement~(i), i.e., each $\hat{X}\in \QIbM$ with rank 1 is locally stable. Let $\hat{X}=\sign(w)w^{\top}$, where $|w|\succ \vect{0}_n$. For any matrix $\Delta\in\real^{n\times{n}}$ such that for any $k\in\until{n}$, $\delta_{k}=\max_{i}|\Delta_{ik}|<|w_{k}|$, we have $\sign(\hat{X}_{*k}+\Delta_{*k})=\sign(\hat{X}_{*k})$. Due to claim \ref{claim-modelB-main-theorem-1} and the proof of claim \ref{claim-modelB-main-theorem-2}, we know that, for $X(0)=\hat{X}+\Delta$, $X(t)$ satisfies that, for any $t\ge 0$,
\begin{enumerate}[label=(\arabic*)]
\item $\sign(X(t))=\sign(X(0))=\sign(\hat{X})$;
\item $|w_{k}|\!-\!\delta_{k}\!\le\! \min_{i}|X_{ik}(t)|\!\le\! \max_{i}|X_{ik}(t)|\!\le\! |w_{k}|\!+\!\delta_{k}$.
\end{enumerate}
Therefore, for any $i$, $X_{ik}(t)$ is of the form $\alpha_{ik}\!(t)\!\sign(\hat{X}_{ik})$, where $0<|w_{k}|-\delta_{k} \le \alpha_{ik}(t)\le |w_{k}|+\delta_{k}$. We have
\begin{align*}
\maxnorm{X(t)-\hat{X}} & \!=\! \max_{ij} \big| \alpha_{ij}(t)\sign(\hat{X}_{ij})\! -\! |w_{j}| \sign(\hat{X}_{ij}) \big|\\
& \!=\! \max_{ij} \big| \alpha_{ij}(t) - |w_{j}| \big|\!\le\! \delta,
\end{align*}
where $\delta=\max\limits_{k}\delta_{k}$. Therefore, for any $\epsilon>0$, there exists $\delta=\min\{ \frac{\max_{k}|w_{k}|}{2}, \frac{\epsilon}{2} \}$ such that, for any $X(0)$ satisfying $\maxnorm{X(0)-X^*}<\delta$, $\maxnorm{X(t)-X^*}<\epsilon$ for any $t\ge 0$. That is, $\hat{X}$ is locally stable.

Now we prove (ii)(a) $\Rightarrow$ (ii)(b). For simplicity, denote $X^{+}=\fIbM (X)$. Firstly, one can easily check that $\fIbM(X)$ is continuous for any $X\in \Srsymm$. Secondly, for any $X(0)\in \Srsymm$ and any given $k\in\until{n}$, according to the proof of Proposition~\ref{prop:IbM-finite-time-behav}, $\lVert X_{*k}(t) \rVert_{\infty}\le \lVert X_{*k}(0) \rVert_{\infty}$ for any $t\in \mathbb{Z}_{\ge 0}$. In addition, let $\delta = \liminf\limits_{t\to\infty} \min\limits_{i,j}|X_{ij}(t)| >0$, then there exists $\tilde{t}\in \mathbb{Z}_{\ge 0}$ such that $\min\limits_{i,j}|X_{ij}(t)|\ge \delta/2$ for any $t\ge \tilde{t}$. Therefore, the set 
\begin{align*}
G_c =  \Big{\{}X\in \Srsymm \,\Big|\, & \min_{i,j}|X_{ij}|\ge \delta/2,\text{ and, for any }k,\\
&\norm{X_{*k}}_{\infty} \le \norm{X_{*k}(0)}_{\infty} \Big{\}}\\
\end{align*}
is a compact subset of $\Srsymm $ and $X(t)\in G_c$ for any $t\ge \tilde{t}$. Thirdly, define $V(X_{*k})=\lVert X_{*k} \rVert_{\infty}$. The function $V$ is continuous on $\Srsymm $ and, according to the proof of Proposition~\ref{prop:IbM-finite-time-behav}, satisfies $V( X_{*k}^+ )-V(X_{*k})\le 0$ for any $X\in \Srsymm $. According to the extended LaSalle invariance principle presented in Theorem~2 of~\cite{WM-FB:17s}, we conclude that, given any $X(0)\in \Srsymm$ such that $\liminf\limits_{t\to\infty} \min\limits_{i,j}|X_{ij}(t)| = \delta$, $X(t)$ converges to the largest invariant set $M$ of the set $E=\setdef{X\in G_c}{V( X_{*k}^+ )-V(X_{*k})=0\text{ for any }k}$.

Now we characterize the largest invariant set $M$. For any $X\in M\subset E$ and $k\in\until{n}$, $V(X_{*k}^{+})=V(X_{*k})=\norm{X_{*k}}_{\infty}$. Suppose $\lvert X^{+}_{ik}\rvert=\max\limits_{\ell}\lvert X^{+}_{\ell k}\rvert$. Since
\begin{equation}
\label{ineq111}
\begin{split}
\lvert X^{+}_{ik}\rvert &= \frac{1}{\norm{X_{i*}}_{1}}\left|\sum\limits_{\ell=1}^{n}X_{i\ell}X_{\ell k}\right|\\
&\leq \frac{1}{\norm{X_{i*}}_{1}}\sum\limits_{\ell=1}^{n}\lvert X_{i\ell}\rvert\lvert X_{\ell k}\rvert\leq \max\limits_{\ell}\lvert X_{\ell k} \rvert,
\end{split}    
\end{equation}
we need all of these inequalities to hold with equality 
Since $X\in G_c \subset \Srsymm$ implies $\lvert X_{j\ell}\rvert > 0$, for any $j,\ell\in\until{n}$, $X$ must satisfy that
\begin{enumerate}[label=(\alph*)]
   \item $X_{i*}$ and $X_{*k}$ have the same or opposite sign pattern, i.e., $\sign{(X_{*k})}=\sign{(X_{k*})}=\pm\sign{(X_{i*})}$, \label{condM11}
   \item All entries of $X_{*k}$ have magnitude $\norm{X_{*k}}_{\infty}$. \label{condM2}
\end{enumerate}
Moreover, since the set $M$ is invariant, $X\in M$ implies $X^+\in M\subset E$, which in turn implies that, for any $p$, $\lvert X^{+}_{pk}\rvert=\norm{X^{+}_{*k}}_{\infty}=\norm{X_{*k}}_{\infty}$. Following the same argument on the conditions such that the inequalities~\eqref{ineq111} become strict equalities, we know that, for any $p$, $\sign{(X_{p*})}=\pm\sign{(X_{*k}^{\top})}$ and $\lvert X_{pk}\rvert=\norm{X_{*k}}_{\infty}$ for any $k$. Using these relationships, we conclude that for any $i$ and $j$, $X_{i*}$ and $X_{j*}$ must have the same or the opposite sign pattern, and that $\lvert X_{ij}\rvert=\norm{X_{*j}}_{\infty}$. Let $w = X_{1*}^{\top}$. Each row of $X$ is thereby equal to either $w^{\top}$ or $-w^{\top}$. Therefore, $X$ is of the form $X=cw^{\top}$, where $c\in \{-1,1\}^n$. Moreover, since all the diagonal entries of $X$ are positive, the column vector $c$ must satisfy $c_iw_i=1$ for any $i$, which implies $c=\sign(w)$. Therefore, $X=\sign(w)w^{\top}$. Thus, since any matrix $X$ of the form $\sign(w)w^{\top}$, with $|w|\succ \vect{0}_n$, is a fixed point of system~\eqref{eq:IbM}, we conclude that 
\begin{equation*}
\begin{split}
M =& \setdef{X=\sign(w)w^{\top}}{\delta/2 \le w_{i}\le \maxnorm{X(0)},\\
&w\in\real^{n}\setminus\{\vect{0}_n\},\text{ for any }i\in\until{n}},
\end{split}
\end{equation*}
which is a compact subset of $\Srsymm$. Following the same line of argument in the proof of Theorem~\ref{theoremMain}, we conclude that there exists $\epsilon>0$ such that any $X$ in the neighbor set $\mathcal{U}(M,\epsilon)$ satisfies social balance.

Since $X(t)\to M$ as $t\to \infty$, there exists $t_0\in \mathbb{Z}_{\ge 0}$ such that $X(t)\in \mathcal{U}(M,\epsilon)$ for any $t\ge t_0$. Therefore, $X(t)$ satisfies social balance for any $t\ge t_0$, which concludes the proof for (ii)(a) $\Rightarrow$ (ii)(b). 

Now we proceed to prove (ii)(b) $\Rightarrow$ (ii)(c). If $G(X(t_0))$ satisfies social balance for some $t_0$ and $\max_{\ell}|X_{\ell j}(t_0)|=\min_{\ell}|X_{\ell j}(t_0)|$ for any $j$, then $X(t_0)=\sign(X_{*1}(t_0))X_{*1}(t_0)^{\top}\in \QIbM$, which is already a fixed point of system~\eqref{eq:IbM}. Suppose $G(X(t_0))$ satisfies social balance at time $t_0$ but there exists $j$ such that $\max_{\ell}|X_{\ell j}(t_0)|>\min_{\ell}|X_{\ell j}(t_0)|$. For any $t\ge t_0$, since
\begin{align*}
|X_{ij}(t+1)| & \ge \frac{\norm{X_{i*}(t)}_1 - \minnorm{X(t_0)}}{\norm{X_{i*}(t)}_1} \min_{\ell} |X_{\ell j}(t)| \\
                    & \quad + \frac{\minnorm{X(t_0)}}{\norm{X_{i*}(t)}_1} \max_{\ell} |X_{\ell j}(t)|,
\end{align*}
for any $i$, and $|X_{ij}(t+1)|\le \max_{\ell} |X_{\ell j}(t)|$, we have that
\begin{align*}
& \max_{\ell}  |X_{\ell j}(t+1)| - \min_{\ell} |X_{\ell j}(t+1)|\\
                   & \text{ }\le \Big( 1-\frac{\minnorm{X(t_0)}}{\norm{X_{i*}(t)}_1} \Big)\Big(\max_{\ell}  |X_{\ell j}(t)| - \min_{\ell} |X_{\ell j}(t)|\Big)\\
                   & \text{ }\le \Big( 1-\frac{\minnorm{X(t_0)}}{n\maxnorm{X(t_0)}} \Big)\Big(\max_{\ell}  |X_{\ell j}(t)| - \min_{\ell} |X_{\ell j}(t)|\Big).
\end{align*}
Therefore, for any given $j$,  after $t_0$, $\max_{\ell}  |X_{\ell j}(t)| - \min_{\ell} |X_{\ell j}(t)|$ exponentially converges to $0$. In addition, since $\sign(X(t))=\sign(X(t_0))$ for all $t\ge t_0$, we conclude that $X(t)$ converges to a matrix in the form $\sign(w)w^{\top}$. This concludes the proof for (ii)(b) $\Rightarrow$ (ii)(c).

(ii)(b) $\Rightarrow$ (ii)(a) and (ii)(c) $\Rightarrow$ (ii)(b) are proved following the same arguments as in the proof of Theorem~\ref{theoremMain}.
\end{proof}

\section{Further discussion and numerical simulations}\label{section:discussion-simulation}
\label{moredicussions}
\subsection{Numerical validation of the non-vanishing appraisal condition and model comparisons}
\label{sub_conv}

Monte Carlo validation indicates that, for the homophily-based model, the non-vanishing appraisal condition, given by Definition~\ref{def:lower-bounded-appraisal-condition}, holds for generic initial conditions in $\Snz$. By generic initial condition, we mean each of $X(0)$'s entries is independently randomly generated from the uniform distribution on some support $[-a,a]$. Since the homophily-based model is independent of scaling, we only need to consider the support $[-1,1]$.
For any randomly generated $X(0)\in\Snz \cap [-1,1]^{n\times n}$, define the random variable $Z~:~\Snz~\to~\{0,1\}$ as
\begin{align*}
Z(X(0))=
\begin{cases}
\displaystyle 1,\quad &\text{if } \min\limits_{100\le t\le 1000} \min\limits_{i,j} |X_{ij}(t)| \ge 0.001, \\
\displaystyle 0,\quad &\text{otherwise}.
\end{cases}
\end{align*}
Let $p=\mathbb{P}[Z(X(0))=1]$. For $N$ such independent random samples $Z_{1},\dots,Z_{N}$, define $\hat{p}_{N}=\sum^{N}_{i=1}Z_{i}/N$. For any accuracy $1-\varepsilon\in(0,1)$ and confidence level $1-\xi\in(0,1)$, $|\hat{p}_{N}-p|<\varepsilon$ with probability greater than $1-\xi$ if the Chernoff bound is satisfied: $N\geq\frac{1}{2\varepsilon^2}\log{\frac{2}{\xi}}$.
For $\varepsilon=\xi=0.01$, the bound is satisfied by $N=27000$. We ran the 27000 independent simulations of the homophily-based model with $n=8$, and found that $\hat{p}=1$. Therefore, we conclude that, for any generic initial condition $X(0)\in\Snz$, with 99$\%$ confidence level, there is at least 0.99 probability that every entry of $|X(t)|$ is lower bounded by a positive scalar (set to be 0.001 in this simulation) for all $t\in \{100,\dots,10000\}$.

We remark that the continuous-time homophily-based model~\cite{VAT-PVD-PDL:13} has a similar property that the interpersonal appraisals reach social balance in finite time, however they diverge later also in finite time.

The same Monte Carlo validation is also applied to the influence-based model, except that now the generic initial conditions $X(0)\in \Srsymm \subset \real^{n\times n}$ is generated by the following steps: 1) Randomly and independently generate the diagonal and the upper triangular entries of a matrix $\hat{X}\in \real^{n\times n}$ from the uniform distribution on $[-1,1]$; 2) Let $\hat{X}_{ij}=\hat{X}_{ji}$ for any $i>j$; 3) Randomly and independently generate the entries of a $n\times 1$ vector $\gamma$ from the uniform distribution on $[0,1]$; 4) Let $X(0)=\diag(\gamma)\hat{X}$.
We obtained that, for any initial condition $X(0)\in\Srsymm$, with 99$\%$ confidence level, there is at least 0.99 probability that every entry of $|X(t)|$ is uniformly strictly lower bounded from $0$ for all $t\in \{100,\dots,10000\}$. 

In the continuous-time influence-like model~\cite{SAM-JK-RDK-SHS:11,VAT-PVD-PDL:13}, when the initial appraisal matrix $X(0)$ is a normal matrix, i.e., when $X(0)X(0)^{\top}=X(0)^{\top}X(0)$, the appraisal network $G(X(t))$ almost surely reaches social balance only in the limit case when the network size $n$ tends to infinity. Compared with these models, besides the desired convergence property, our influenced-based model has the following advantages: 1) Unlike the set of normal matrices, of which the sociological meaning is not explicit, the almost-sure convergence to social balance in our influence-based model holds for any $X(0)=\diag(\gamma)\hat{X}$, where $\hat{X}$ is symmetric and $\diag(\gamma)$ has positive diagonals.  With the term $\diag(\gamma)$, our model allows for individuals' heterogeneous scaling of appraisals, which is sociologically more reasonable; 2) In our influence-based model, the almost-sure finite-time achievement of social balance holds for any finite network size $n$.

For both homophily-based and influence-based models, Monte Carlo validations with uniform but asymmetric initial appraisal distributions leads to the same results, but are not presented here due to the limit of space.

We further numerically estimate, for our influence-based model, the probability that the non-vanishing appraisal condition holds for generic initial conditions $X(0)\in \Snz\cap [-1,1]^{n\times n}$. According to Theorem~\ref{theoremMain1}, this probability is also the probability that the appraisal network converges to social balance. As shown in Fig.~\ref{fig:balance_prob_compare}, for the influence-based model, the probability of converging to social balance is quite low and decays to zero as the network size increases. Such feature indicates that, if system~\eqref{eq:HbM} and~\eqref{eq:IbM} correctly characterize the homophily and influence mechanisms respectively, then the homophily mechanism is a more universal explanation for the convergence of appraisal networks to social balance. That is, it is more probable that the empirically observed structurally balanced social networks are formed via the homophily mechanism rather than the influence mechanism.

\begin{figure}\label{fig:balance_prob_compare}
\begin{center}
\includegraphics[width=0.75\linewidth]{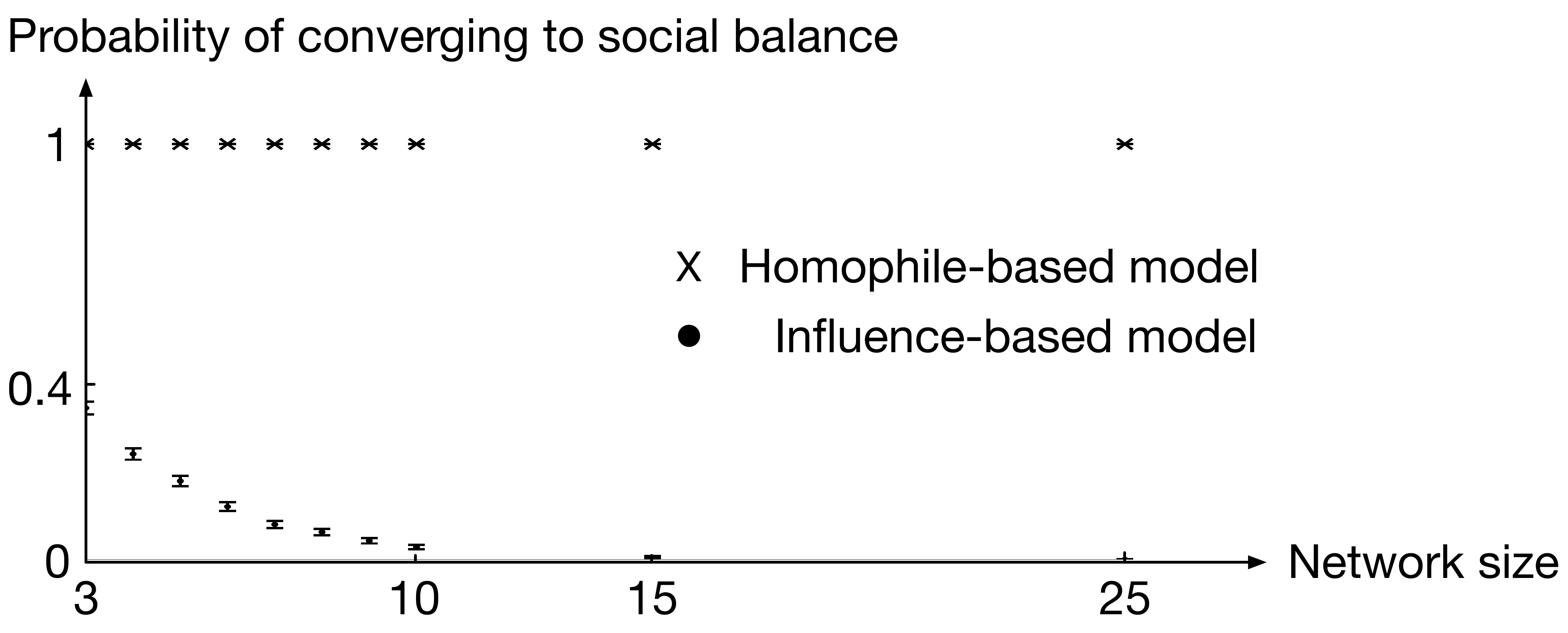}
\end{center}
\caption{Error-bar plot of the estimated probability of
    converging to social balance for both the homophily-based model and the
    influence model. For each network size, we run 1000 realizations, each
    with an initial condition $X(0)$ randomly generated from $\Snz\cap
    [-1.1]^{n\times n}$ in the same way as in the first paragraph of
    Section 5.1. Numerical convergence is determined by whether the
    non-vanishing appraisal condition holds. The error bars are taken as
    the estimated standard deviations of the probability estimation and
    turn our to be very small (0 for the homophily-based model).}
\end{figure}

\subsection{Social balance under perturbation}

\begin{figure} 
  \captionsetup[subfigure]{labelformat=empty}
  \centering
  \subfloat[no link added]{%
    \includegraphics[width=0.22\linewidth]{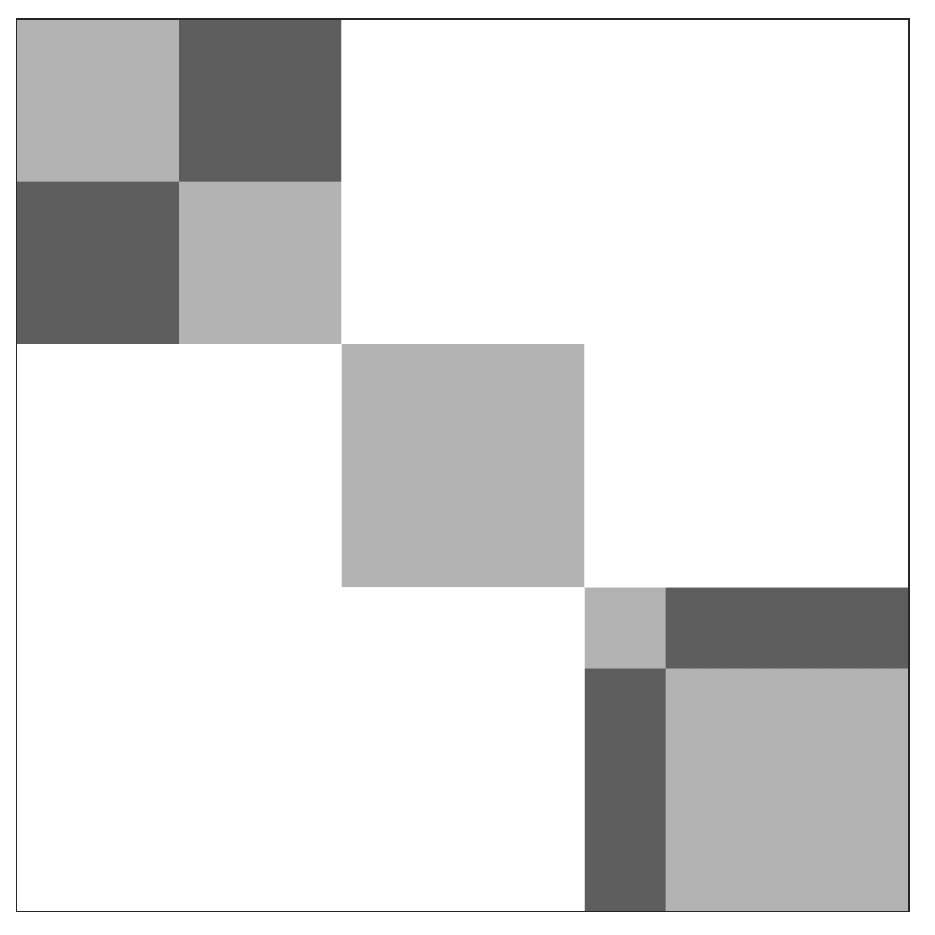}}\hfill
  \subfloat[$t=0$]{%
    \includegraphics[width=0.22\linewidth]{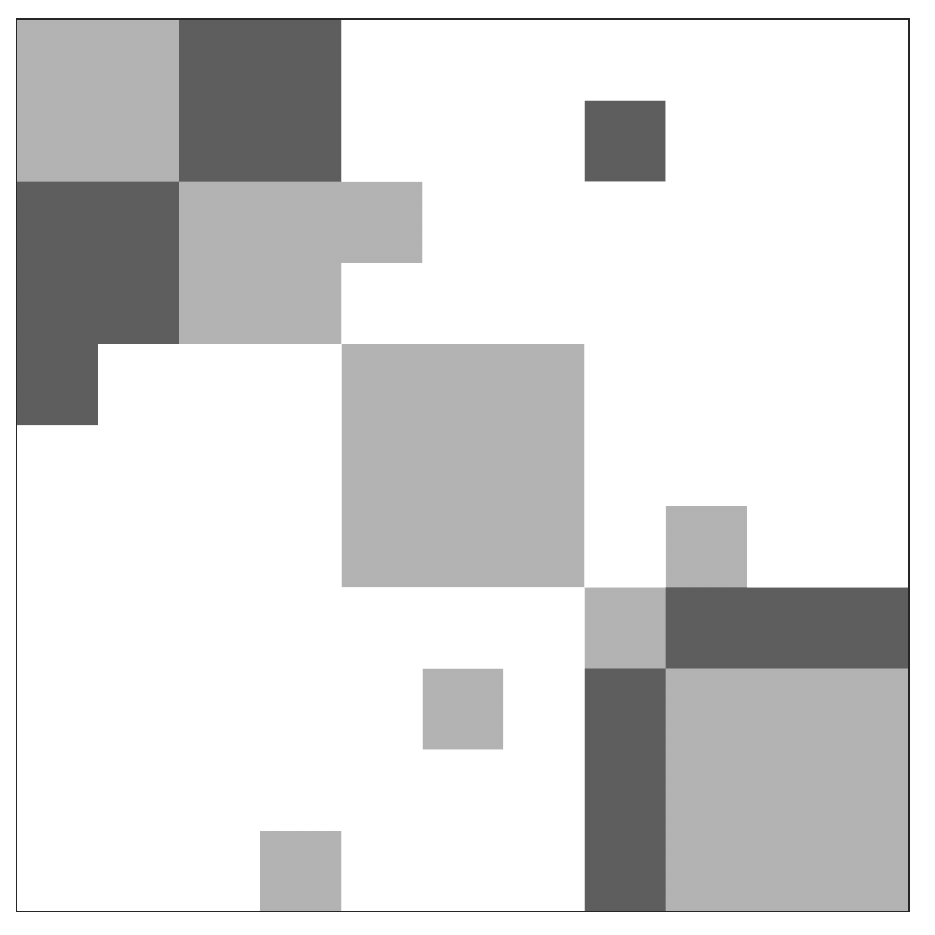}}\hfill
  \subfloat[$t=1$]{%
    \includegraphics[width=0.22\linewidth]{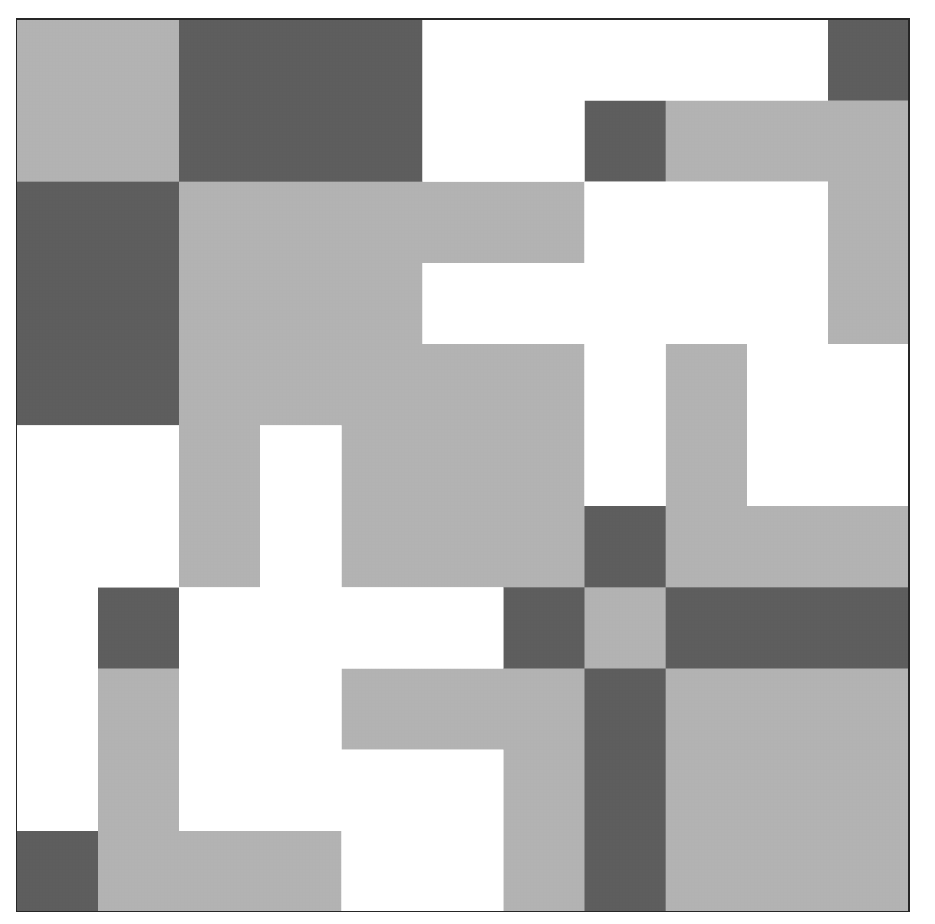}}\hfill
  \subfloat[$t=5$]{%
    \includegraphics[width=0.22\linewidth]{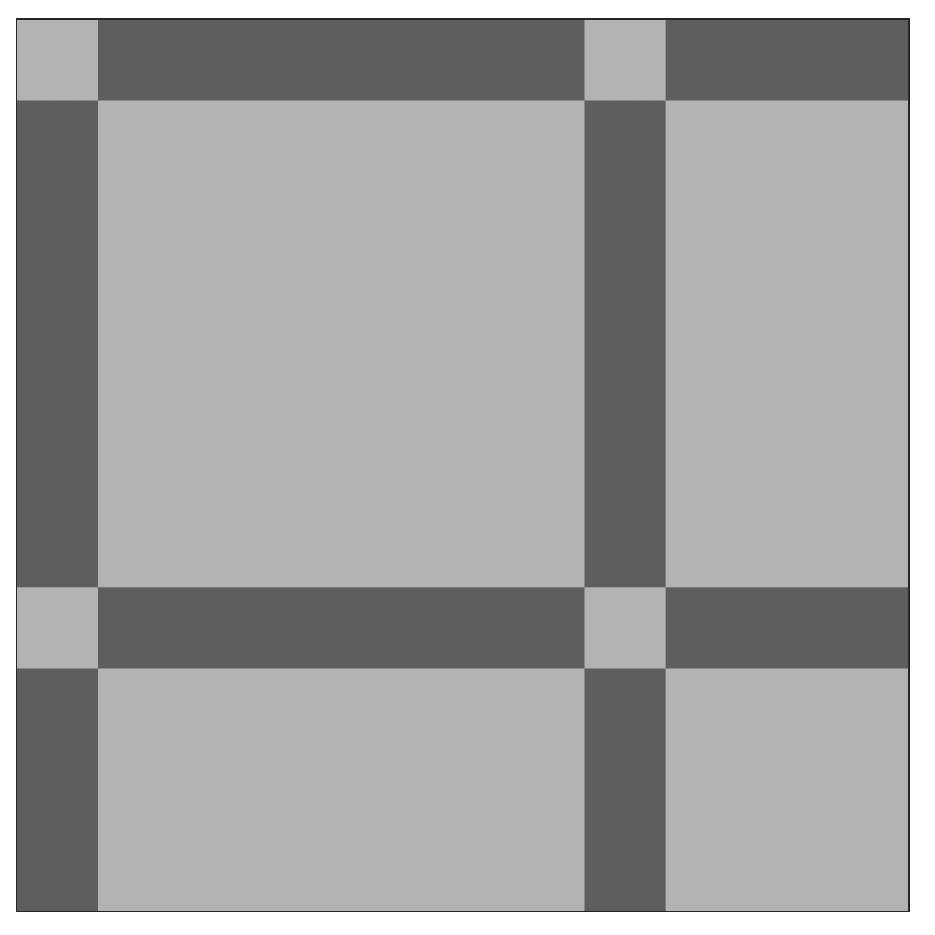}}
  \caption{Visualization of the evolution of the
    appraisal matrix under perturbations in the homophily-based model. For each entry, the dark
    gray color indicates a negative appraisal, while the light gray 
    indicates a positive one. The white color indicate no
    appraisal. The appraisal network has 11 nodes and is initially 
    in a social balance state with 3 isolated subgraphs.  With 6 links (4 positive and 2 negative links)
    added to the network, the appraisal network evolves to a
    single-clique structurally balanced state after 5 iterations.}
  \label{fig:eg- unstable-under=perturbation} 
\end{figure}
 
For the homophily model, extensive simulation observations indicate that social balance with $k>1$ isolated subgraphs is unstable under perturbations. With some links added to the network, the subnetworks connected by the added links merge into larger subnetworks and the perturbed network converges to another balanced state with fewer isolated subgraphs, see Fig.~\ref{fig:eg- unstable-under=perturbation} as an
example and the following two insightful scenarios.

\emph{Example 1: (Globalization of local conflicts)} Consider the appraisal network with two isolated subgraphs. Each subgraph is structurally balanced and made up of two antagonistic factions. The two factions in subgraph 1 are node sets $V_1$ and $V_2$ respectively, while the two factions in subgraph 2 are $V_3$ and $V_4$ respectively. Suppose one link with weight $\eta$ is added from one node in $V_1$ to one node in $V_3$. By computing the closed form expression of $X(2)$, we obtain that the perturbed appraisal network always recovers to a complete and structurally balanced network composed of two antagonistic factions. Moreover, if $\eta>0$, then the two factions are $V_1\cup V_3$ and $V_2\cup V_4$; If $\eta<0$, then the two factions are $V_1\cup V_4$ and $V_2\cup V_3$.
Figure~\ref{fig:perturbation-eg1} visualizes the behavior described above. In reality, such behavior could be interpreted as the escalation of local conflicts.  
One real example of such phenomena is the formation of the globalized conflicts between the Axis and the Ally in World War II, after the Nazi German allied with the Imperial Japan.

\begin{figure}
\begin{center}
\subfloat[$\eta>0$]{\label{fig:perturbation-eg1-ep-pos} \includegraphics[width=.35\linewidth]{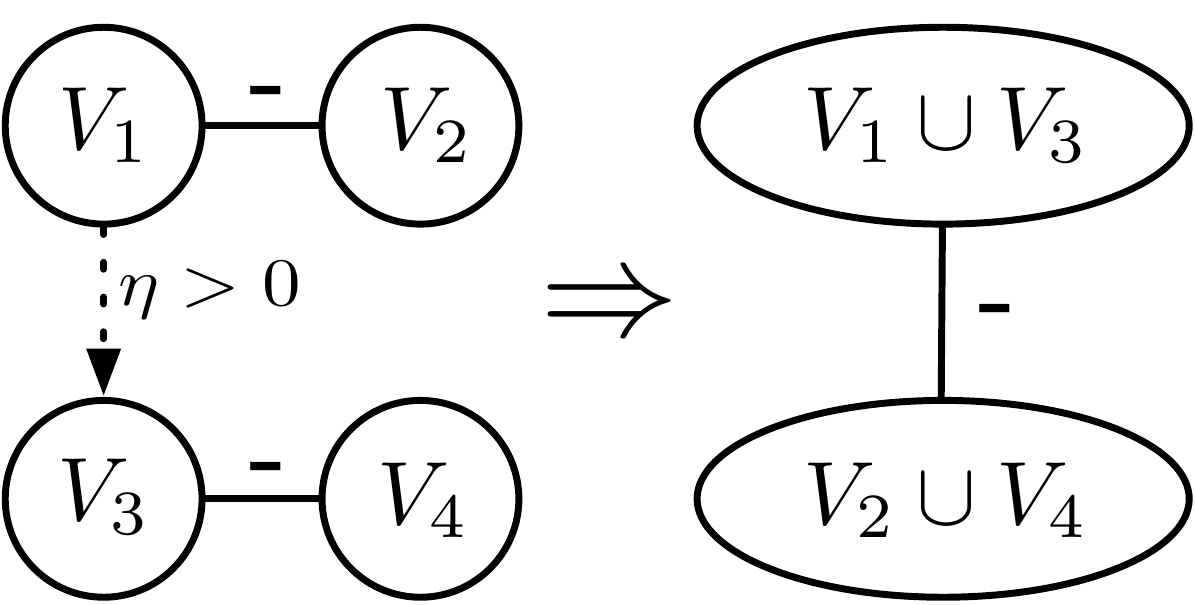}}\qquad
\subfloat[$\eta<0$]{\label{fig:perturbation-eg1-ep-neg} \includegraphics[width=.35\linewidth]{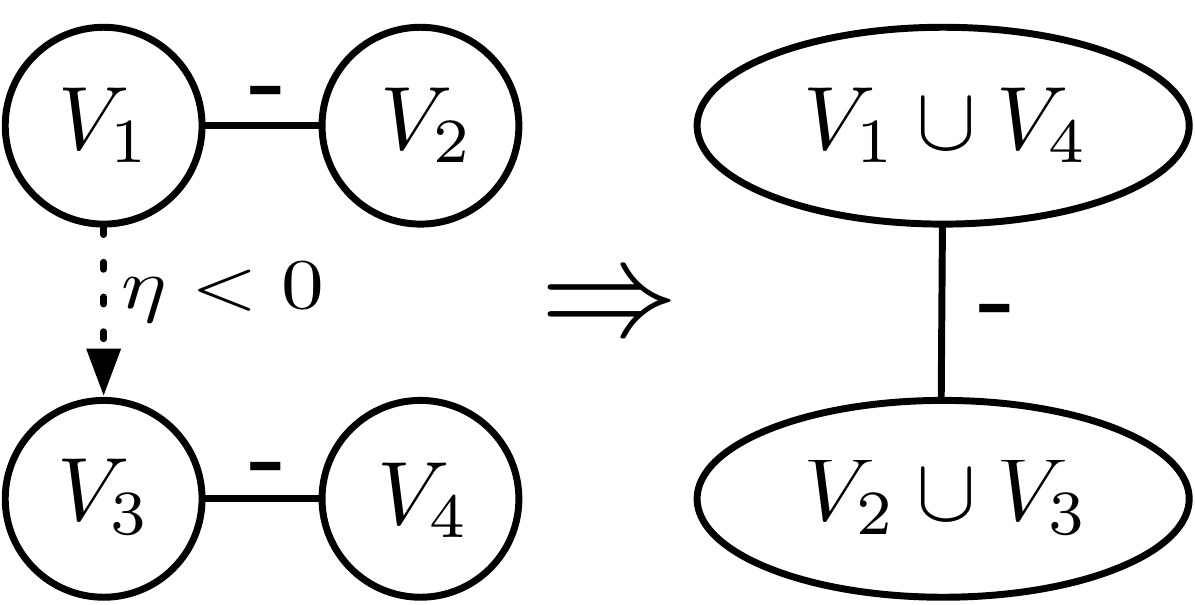}}
\caption{\footnotesize Visual illustration of the behavior of the social balance with $2$ isolated subgraphs with the addition of one inter-subgraph link.}\label{fig:perturbation-eg1}
\end{center}
\end{figure}

\emph{Example 2: (Competition for ally and mediation of conflicts)}
Consider an appraisal network with two isolated subgraphs: subgraph
1 with two antagonistic factions $V_1 =
\{1,\dots,n_1\}$ and $V_2=\{n_1+1,\dots, n_1+n_2\}$, and subgraph 2
with only one faction $V_3 = \{n_1+n_2+1,\dots,
n_1+n_2+n_3\}$. Suppose the appraisal matrix associated with subgraph 1
is given by $\alpha bb^{\top}$, where
$b=(\vect{1}_{n_1}^{\top},-\vect{1}_{n_2}^{\top})^{\top}$, and
$\alpha>0$ represents the sentiment strength inside subgraph
1. Similarly, the appraisal matrix associated with subgraph 2 is given
by $\hat{\alpha}\hat{b}\hat{b}^{\top}$, where $\hat{b}=\vect{1}_{n_3}$
and $\hat{\alpha}>0$ represents the sentiment strength inside subgraph
2. Imagine then that both $V_1$ and $V_2$ aim to ally
with $V_3$. Accordingly, suppose that, in order to ally with $V_3$, each node in $V_1$ builds a
bilateral link with each node in $V_3$, with link weight
$\epsilon_1>0$, while each node in $V_2$ builds a bilateral link with
each node in $V_3$ with weight $\epsilon_2>0$. With all these links
added, the associated appraisal matrix takes the following form:
\begin{equation*}
X(0) = 
\begin{bmatrix}
\alpha \vect{1}_{n_1} \vect{1}_{n_1}^{\top} & -\alpha \vect{1}_{n_1}\vect{1}_{n_2}^{\top} & \epsilon_1 \vect{1}_{n_1}\vect{1}_{n_3}^{\top} \\
\smallskip
-\alpha \vect{1}_{n_2}\vect{1}_{n_1}^{\top} & \alpha \vect{1}_{n_2}\vect{1}_{n_2}^{\top} & \epsilon_2 \vect{1}_{n_2}\vect{1}_{n_3}^{\top} \\
\smallskip
\epsilon_1 \vect{1}_{n_3}\vect{1}_{n_1} & \epsilon_2 \vect{1}_{n_3}\vect{1}_{n_1}^{\top} & \hat{\alpha} \vect{1}_{n_3}\vect{1}_{n_3}^{\top}
\end{bmatrix}.
\end{equation*}  
Along the evolution of $X(t)$ determined by $X(0)$,
we obtain the following numerical results.

(i) If $\epsilon_1 n_1 > \epsilon_2 n_2$, i.e., faction $V_1$ takes greater effort than $V_2$ in allying with $V_3$, then $V_1$ gains at least one ally, either $V_2$ or $V_3$. Moreover, the following conditions $\epsilon_1 n_1 - \epsilon_2 n_2  \ge \hat{\alpha}\epsilon_2 n_3/\alpha$ and $\epsilon_1 \epsilon_2 n_3 \le \alpha^2 (n_1+n_2)$
guarantee that $V_1$ ally with $V_3$; This statement also holds when all the subscripts $1$ and $2$ are switched;

(ii) If $\epsilon_1 \epsilon_2 n_3 \le \alpha^2 (n_1+n_2)$, then $V_3$ eventually gains at least one ally. That is, $V_3$ avoids the situation in which $V_1$ and $V_2$ end up allying with each other against $V_3$;

(iii) Any of the following conditions guarantees that no negative link exists in the asymptotic appraisal network: (1) $\epsilon_1 \epsilon_2 n_3 \ge \alpha^2 (n_1+n_2)$ and $\epsilon_1 n_1 - \epsilon_2 n_2=0$; (2) $\epsilon_1 \epsilon_2 n_3 \ge \alpha^2 (n_1+n_2)$ and $0<\epsilon_1 n_1 -\epsilon_2 n_2 \le \epsilon_2 \hat{\alpha}n_3$; (3) $\epsilon_1 \epsilon_2 n_3 \ge \alpha^2 (n_1+n_2)$ and $0<\epsilon_2 n_2 - \epsilon_1 n_1 \le \epsilon_1 \hat{\alpha}n_3$. Notice that the inequality
$\epsilon_1 \epsilon_2 n_3 \ge \alpha^2 (n_1+n_2)$
is required for all the three sufficient conditions. The right-hand side of this inequality above reflects the ``scale'' of the conflicts between factions $V_1$ and $V_2$, while the left-hand side is $V_1$ and $V_2$'s average efforts in allying with $V_3$, multiplied by the size of $V_3$. From the three sufficient conditions, we learn that, the larger the size of $V_3$, the more capable it is of mediating the conflicts between $V_1$ and $V_2$. In addition, $V_1$ and $V_2$'s strong willingness to ally with $V_3$, as well as the sentiment strength inside $V_3$, i.e., $\hat{\alpha}$, also help mediate the conflicts.

\subsection{Distribution of initial conditions and formation of factions in the homophily-based model}
\label{distInit}

We investigate numerically, for the homophily-based model, how initial appraisal distribution determines whether the appraisal network evolves to only one faction or two antagonistic factions. We randomly and independently sample the entries of $X(0)$ from the uniform distribution on $[\xmin,\xmax]$, for which $\ave(\xmin,\xmax)=(\xmax+\xmin)/2$ indicates how the initial appraisals are biased towards being positive. We set $\xmax-\xmin=2$ and change the values of $\ave(\xmin,\xmax)$ and the number of agents. Given $[\xmin,\xmax]$, 30 samples of the initial condition $X(0)$ are independently randomly generated  and for each $X(0)$ we count how many factions appear at $X(500)$. Since any $X(0)$ and $-X(0)$ lead to the same $X(1)$ and $X(t)$ thereafter, we only consider different values of $\ave(\xmin,\xmax)\geq 0$. Figure~\ref{fig_lscale} shows that, for fixed network size, the
smaller the value of $\ave(\xmin,\xmax)$, the more likely it is to find two antagonistic factions; for fixed value of $\ave(\xmin,\xmax)$, the larger the network size, the more likely that only one faction emerges.

Note that similar numerical study in~\cite{SAM-JK-RDK-SHS:11} for the continuous-time influence-like model indicates that, the appraisal network evolves to two antagonistic factions if the initial mean appraisal is non-positive. The appraisal network evolves to all-friendly state if the initial mean is positive. However, such results in~\cite{SAM-JK-RDK-SHS:11} only hold for the limit case of infinitely large network size $n$.

\begin{figure}[!t] 
    \centering

{
        \includegraphics[width=0.7\linewidth]{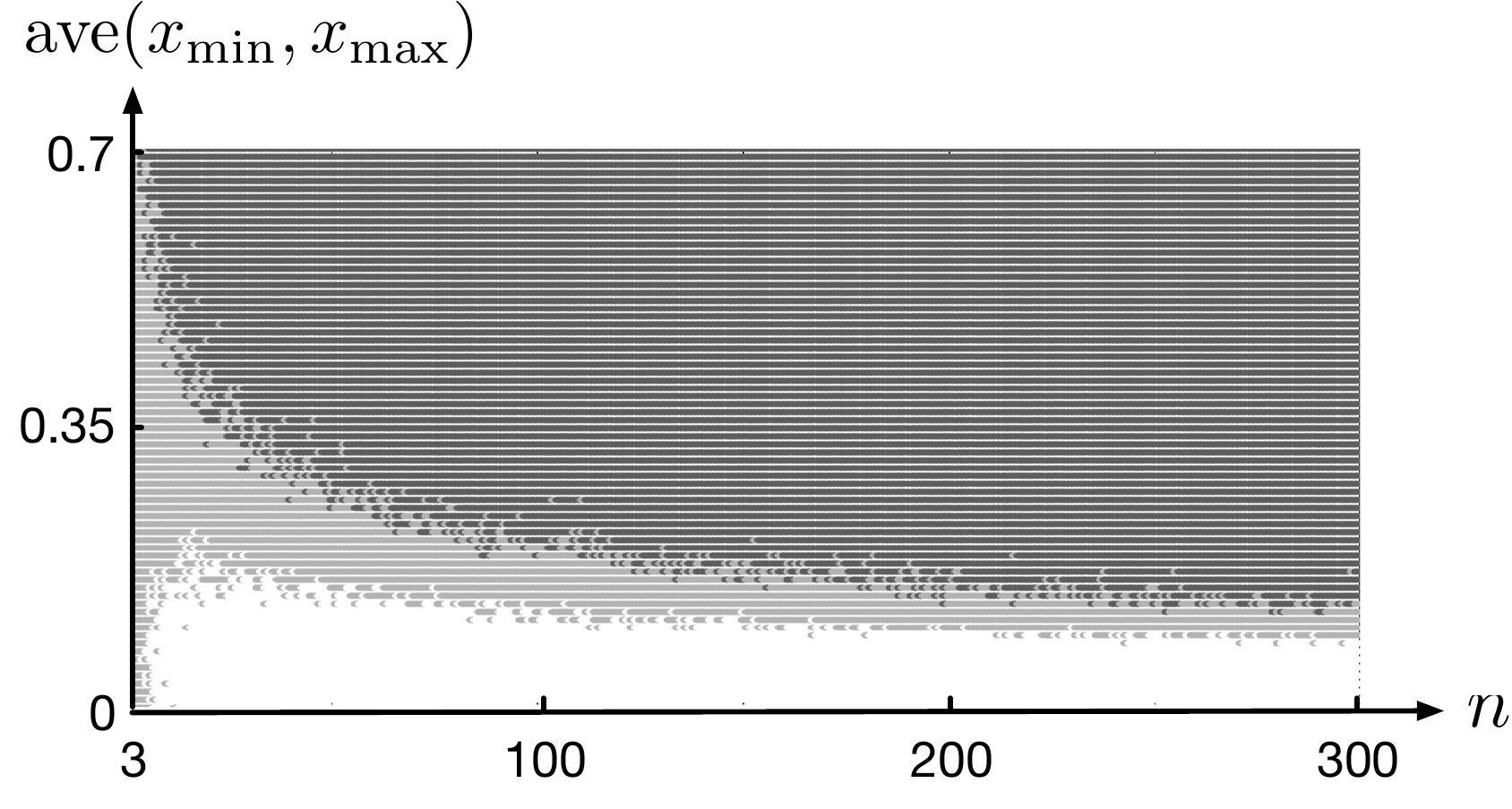}}
    \label{fig_lscalea} 
  \caption{\footnotesize Formation of factions under different initial
    condition distributions for the homophily-based model.  
    The white color indicates the presence of two factions
    in all the 30 random samples, while the dark gray color indicates the
    presence of one faction in all of the samples. The light gray 
    color indicates any other case.}
  \label{fig_lscale} 
\end{figure}

\section{Conclusion}
\label{conclSec}
This paper proposes both homophily-based and influence-based discrete-time models for the
bounded evolution of interpersonal appraisal networks towards social
balance. For either model, the set of fixed points include all the possible balanced configurations, in the sense of sign pattern, of the appraisal network. Under the non-vanishing appraisal condition, we prove that
both models exhibit asymptotic convergence to structurally balanced
networks, while the convergence property holds for larger initial conditions set in the homophily-based model than in the influence-based model. Moreover, our models admits the existence of multiple isolated subgraphs in the final structure of the evolved appraisal network. Numerical
study indicates how the final emergence of factions in the social
network is sensitive to the initial appraisal distribution, and how the system transits from one fixed point to another under perturbations. 

We remark that our models and the previous continuous-time models~\cite{KK-PG-PG:05,SAM-JK-RDK-SHS:11,VAT-PVD-PDL:13,PJ-NEF-FB:13n} all adopt the definition of social balance for complete graphs, or isolated complete subgraphs in our paper, which implies that individuals interact with everyone in the group/subgroup. This assumption limit the scope of the application of our models to (groups of) small-size groups, which are usually assumed to be complete graphs.

Possible future
research directions include a better understanding of the
influence-based model for arbitrary initial conditions, a validation
of the proposed models with laboratory and/or field data, the study of
asynchronous models with pairwise updates, and further study of conditions
and cases in which one socio-psychological mechanism dominates the
other.

\bibliographystyle{plainurl}
\bibliography{alias,Main,FB}

\begin{thebibliography}{10}

\bibitem{CA:13}
C.~Altafini.
\newblock Consensus problems on networks with antagonistic interactions.
\newblock {\em IEEE Transactions on Automatic Control}, 58(4):935--946, 2013.
\newblock \href {http://dx.doi.org/10.1109/TAC.2012.2224251}
  {\path{doi:10.1109/TAC.2012.2224251}}.

\bibitem{TA-PLK-SR:05}
T.~Antal, P.~L. Krapivsky, and S.~Redner.
\newblock Dynamics of social balance on networks.
\newblock {\em Physical Review E}, 72(3):036121, 2005.
\newblock \href {http://dx.doi.org/10.1103/PhysRevE.72.036121}
  {\path{doi:10.1103/PhysRevE.72.036121}}.

\bibitem{TA-PLK-SR:06}
T.~Antal, P.L. Krapivsky, and S.~Redner.
\newblock Social balance on networks: {T}he dynamics of friendship and enmity.
\newblock {\em Physica D: Nonlinear Phenomena}, 224(1):130--136, 2006.
\newblock \href {http://dx.doi.org/10.1016/j.physd.2006.09.028}
  {\path{doi:10.1016/j.physd.2006.09.028}}.

\bibitem{DC-TCG:66}
D.~Cartwright and T.~C. Gleason.
\newblock The number of paths and cycles in a digraph.
\newblock {\em Psychometrika}, 31:179--199, 1966.
\newblock \href {http://dx.doi.org/10.1007/BF02289506}
  {\path{doi:10.1007/BF02289506}}.

\bibitem{DC-FH:56}
D.~Cartwright and F.~Harary.
\newblock Structural balance: {A} generalization of {H}eider's theory.
\newblock {\em Psychological Review}, 63(5):277, 1956.
\newblock \href {http://dx.doi.org/10.1037/h0046049}
  {\path{doi:10.1037/h0046049}}.

\bibitem{JAD:67}
J.~A. Davis.
\newblock Clustering and structural balance in graphs.
\newblock {\em Human Relations}, 20:181--187, 1967.
\newblock \href {http://dx.doi.org/10.1177/001872676702000206}
  {\path{doi:10.1177/001872676702000206}}.

\bibitem{WdN:99}
W.~{de~Nooy}.
\newblock The sign of affection: {B}alance-theoretic models and incomplete
  signed digraphs.
\newblock {\em Social Networks}, 21(3):269--286, 1999.
\newblock \href {http://dx.doi.org/10.1016/S0378-8733(99)00012-X}
  {\path{doi:10.1016/S0378-8733(99)00012-X}}.

\bibitem{PD-DK:01}
P.~Doreian and D.~Krackhardt.
\newblock Pre-transitive balance mechanisms for signed networks.
\newblock {\em Journal of Mathematical Sociology}, 25(1):43--67, 2001.
\newblock \href {http://dx.doi.org/10.1080/0022250X.2001.9990244}
  {\path{doi:10.1080/0022250X.2001.9990244}}.

\bibitem{PD-AM:96}
P.~Doreian and A.~Mrvar.
\newblock A partitioning approach to structural balance.
\newblock {\em Social Networks}, 18:149--168, 1996.
\newblock \href {http://dx.doi.org/10.1016/0378-8733(95)00259-6}
  {\path{doi:10.1016/0378-8733(95)00259-6}}.

\bibitem{DE-JK:10}
D.~Easley and J.~Kleinberg.
\newblock {\em Networks, Crowds, and Markets: Reasoning About a Highly
  Connected World}.
\newblock Cambridge University Press, 2010.

\bibitem{GF-GI-CA:11}
G.~Facchetti, G.~Iacono, and C.~Altafini.
\newblock Computing global structural balance in large-scale signed social
  networks.
\newblock {\em Proceedings of the National Academy of Sciences},
  108(52):20953--20958, 2011.
\newblock \href {http://dx.doi.org/10.1073/pnas.1109521108}
  {\path{doi:10.1073/pnas.1109521108}}.

\bibitem{NEF-ECJ:90}
N.~E. Friedkin and E.~C. Johnsen.
\newblock Social influence and opinions.
\newblock {\em Journal of Mathematical Sociology}, 15(3-4):193--206, 1990.
\newblock \href {http://dx.doi.org/10.1080/0022250X.1990.9990069}
  {\path{doi:10.1080/0022250X.1990.9990069}}.

\bibitem{NEF-ECJ:11}
N.~E. Friedkin and E.~C. Johnsen.
\newblock {\em Social Influence Network Theory: {A} Sociological Examination of
  Small Group Dynamics}.
\newblock Cambridge University Press, 2011.

\bibitem{FH:53}
F.~Harary.
\newblock On the notion of balance of a signed graph.
\newblock {\em Michigan Mathematical Journal}, 2(2):143--146, 1953.
\newblock \href {http://dx.doi.org/10.1307/mmj/1028989917}
  {\path{doi:10.1307/mmj/1028989917}}.

\bibitem{FH:61}
F.~Harary.
\newblock A structural analysis of the situation in the {Middle East} in 1956.
\newblock {\em Journal of Conflict Resolution}, 5:167--178, 1961.
\newblock \href {http://dx.doi.org/10.1177/002200276100500204}
  {\path{doi:10.1177/002200276100500204}}.

\bibitem{FH:44}
F.~Heider.
\newblock Social perception and phenomenal causality.
\newblock {\em Psychological Review}, 51(6):358--374, 1944.
\newblock \href {http://dx.doi.org/10.1037/h0055425}
  {\path{doi:10.1037/h0055425}}.

\bibitem{FH:46}
F.~Heider.
\newblock Attitudes and cognitive organization.
\newblock {\em The Journal of Psychology}, 21(1):107--112, 1946.
\newblock \href {http://dx.doi.org/10.1080/00223980.1946.9917275}
  {\path{doi:10.1080/00223980.1946.9917275}}.

\bibitem{JMH:14}
J.~M. Hendrickx.
\newblock A lifting approach to models of opinion dynamics with antagonisms.
\newblock In {\em {IEEE} Conf.\ on Decision and Control}, pages 2118--2123,
  December 2014.
\newblock \href {http://dx.doi.org/10.1109/CDC.2014.7039711}
  {\path{doi:10.1109/CDC.2014.7039711}}.

\bibitem{NMH-RBH-CBDS:69}
N.~M. Henley, R.~B. Horsfall, and C.~B.~De Soto.
\newblock Goodness of figure and social structure.
\newblock {\em Psychological Review}, 76:194--204, 1969.
\newblock \href {http://dx.doi.org/10.1037/h0027358}
  {\path{doi:10.1037/h0027358}}.

\bibitem{PJ-NEF-FB:13n}
P.~Jia, N.~E. Friedkin, and F.~Bullo.
\newblock The coevolution of appraisal and influence networks leads to
  structural balance.
\newblock {\em IEEE Transactions on Network Science and Engineering},
  3(4):286--298, 2016.
\newblock \href {http://dx.doi.org/10.1109/TNSE.2016.2600058}
  {\path{doi:10.1109/TNSE.2016.2600058}}.

\bibitem{MK-KSC:12}
M.~Kim and K.~S. Candan.
\newblock {SBV-Cut: Vertex-cut} based graph partitioning using structural
  balance vertices.
\newblock {\em Data \& Knowledge Engineering}, 72:285--303, 2012.
\newblock \href {http://dx.doi.org/10.1016/j.datak.2011.11.004}
  {\path{doi:10.1016/j.datak.2011.11.004}}.

\bibitem{MGK:64}
M.~G. King.
\newblock Structural balance, tension, and segregation in a university group.
\newblock {\em Human Relations}, 17:221--225, 1964.
\newblock \href {http://dx.doi.org/10.1177/001872676401700303}
  {\path{doi:10.1177/001872676401700303}}.

\bibitem{KK-PG-PG:05}
K.~Ku{\l}akowski, P.~Gawro{\'{n}}ski, and P.~Gronek.
\newblock The {H}eider balance: {A} continuous approach.
\newblock {\em International Journal of Modern Physics C}, 16(05):707--716,
  2005.
\newblock \href {http://dx.doi.org/10.1142/S012918310500742X}
  {\path{doi:10.1142/S012918310500742X}}.

\bibitem{PFL-RKM:54}
P.~F. Lazarsfeld and R.~K. Merton.
\newblock Friendship as a social process: {A} substantive and methodological
  analysis.
\newblock In M.~Berger and T.~Abel, editors, {\em Freedom and Control in Modern
  Society}, volume~18, pages 18--66. Van Nostrand, 1954.

\bibitem{JL-DH-JK:10}
J.~Leskovec, D.~Huttenlocher, and J.~Kleinberg.
\newblock Signed networks in social media.
\newblock In {\em 28th International Conference on Human Factors in Computing
  Systems}, pages 1361--1370, Atlanta, USA, 2010.
\newblock \href {http://dx.doi.org/10.1145/1753326.1753532}
  {\path{doi:10.1145/1753326.1753532}}.

\bibitem{MM-MF-PJK-HRR-MAS:11}
M.~Malekzadeh, M.~Fazli, P.~{Jalaly~Khalidabadi}, H.~R. Rabiee, and M.~A.
  Safari.
\newblock Social balance and signed network formation games.
\newblock In {\em Proceedings of 5th KDD Workshop on Social Network Analysis
  (SNA-KDD)}, San Diego, USA, August 2011.

\bibitem{SAM-JK-RDK-SHS:11}
S.~A. Marvel, J.~Kleinberg, R.~D. Kleinberg, and S.~H. Strogatz.
\newblock Continuous-time model of structural balance.
\newblock {\em Proceedings of the National Academy of Sciences},
  108(5):1771--1776, 2011.
\newblock \href {http://dx.doi.org/10.1073/pnas.1013213108}
  {\path{doi:10.1073/pnas.1013213108}}.

\bibitem{HBM-RR:85}
H.~B. McDonald and R.~Rosecrance.
\newblock Alliance and structural balance in the international system: {A}
  reinterpretation.
\newblock {\em Journal of Conflict Resolution}, 29(1):57--82, 1985.
\newblock \href {http://dx.doi.org/10.1177/0022002785029001004}
  {\path{doi:10.1177/0022002785029001004}}.

\bibitem{WM-FB:17s}
W.~Mei and F.~Bullo.
\newblock La{S}alle invariance principle for discrete-time dynamical systems:
  {A} concise and self-contained tutorial, 2017.
\newblock arXiv preprint.
\newblock URL: \url{https://arxiv.org/abs/1710.03710}.

\bibitem{WM-PCV-NEF-FB:17f-arxiv}
W.~Mei, P.~Cisneros-Velarde, N.~E. Friedkin, and F.~Bullo.
\newblock Dynamic social balance and convergent appraisals via homophily and
  influence mechanisms.
\newblock {\em arXiv preprint arXiv:1710.09498}, October 2017.
\newblock URL: \url{https://arxiv.org/abs/1710.09498}.

\bibitem{AVP-MC:17}
A.~V. Proskurnikov and M.~Cao.
\newblock Modulus consensus in discrete-time signed networks and properties of
  special recurrent inequalities.
\newblock In {\em {IEEE} Conf.\ on Decision and Control}, pages 2003--2008,
  Melbourne, Australia, December 2017.
\newblock \href {http://dx.doi.org/10.1109/CDC.2017.8263942}
  {\path{doi:10.1109/CDC.2017.8263942}}.

\bibitem{CMR-NEF:17}
C.~M. Rawling and N.~E. Friedkin.
\newblock The structural balance theory of sentiment networks: {E}laboration
  and test.
\newblock {\em American Journal of Sociology}, 123(2):510--548, 2017.
\newblock \href {http://dx.doi.org/10.1086/692757} {\path{doi:10.1086/692757}}.

\bibitem{HFT:70}
H.~F. Taylor.
\newblock {\em Balance in Small Groups}.
\newblock Sociological Concepts, Methods and Data Series. Van Nostrand
  Reinhold, 1970.

\bibitem{VAT-PVD-PDL:13}
V.~A. Traag, P.~Van Dooren, and P.~De Leenheer.
\newblock Dynamical models explaining social balance and evolution of
  cooperation.
\newblock {\em PLOS ONE}, 8(4):e60063, 2013.
\newblock \href {http://dx.doi.org/10.1371/journal.pone.0060063}
  {\path{doi:10.1371/journal.pone.0060063}}.

\bibitem{AvdR:11}
A.~{van~de~Rijt}.
\newblock The micro-macro link for the theory of structural balance.
\newblock {\em Journal of Mathematical Sociology}, 35(1-3):94--113, 2011.
\newblock \href {http://dx.doi.org/10.1080/0022250X.2010.532262}
  {\path{doi:10.1080/0022250X.2010.532262}}.

\bibitem{SW-MC-KK-AB:15}
S.~Wongkaew, M.~Caponigro, K.~Ku{\l}akowski, and A.~Borz{\`i}.
\newblock On the control of the {H}eider balance model.
\newblock {\em The European Physical Journal Special Topics},
  224(17):3325--3342, 2015.
\newblock \href {http://dx.doi.org/10.1140/epjst/e2015-50087-9}
  {\path{doi:10.1140/epjst/e2015-50087-9}}.

\bibitem{WX-MC-KHJ:16}
W.~Xia, M.~Cao, and K.~H. Johansson.
\newblock Structural balance and opinion separation in trust-mistrust social
  networks.
\newblock {\em IEEE Transactions on Control of Network Systems}, 3(1):46--56,
  2016.
\newblock \href {http://dx.doi.org/10.1109/TCNS.2015.2437528}
  {\path{doi:10.1109/TCNS.2015.2437528}}.

\bibitem{XZ-DZ-FYW:15}
X.~Zheng, D.~Zeng, and F.-Y. Wang.
\newblock Social balance in signed networks.
\newblock {\em Information Systems Frontiers}, 17(5):1077--1095, 2015.
\newblock \href {http://dx.doi.org/10.1007/s10796-014-9483-8}
  {\path{doi:10.1007/s10796-014-9483-8}}.

\end{thebibliography}

\end{document}